\newtheorem{question}[theorem]{Question}
\newtheorem{observation}[theorem]{Observation}
\newtheorem{condition}[theorem]{Condition}
\newcommand{\xhdr}[1]{\paragraph*{\bf #1}}
\newcommand{\yhdr}[1]{\subsection {\bf #1}}
\newcommand{\Exp}[1]{
E\left[{#1}\right]
}
\newcommand{\omt}[1]{}
\newcommand{\Xomit}[1]{}
\newcommand{\supproof}[1]{}
\newcommand{\supproofeq}[1]{}
\newcommand{\twoc}[1]{}
\begin{document}
\title{Team Performance with Test Scores}
  
  \author{
  Jon Kleinberg   \affil{Cornell University, Ithaca NY}
  Maithra Raghu \affil{Cornell University, Ithaca NY}
  }
  
  \begin{bottomstuff}
  Authors' emails:
  {\tt kleinber@cs.cornell.edu},
  {\tt mraghu@cs.cornell.edu}.
  \end{bottomstuff}

\begin{abstract}
Team performance is a ubiquitous area of inquiry in the social
sciences, and it motivates
the problem of {\em team selection} --- choosing the members of a team
for maximum performance.
Influential work of Hong and Page has argued that testing individuals
in isolation and then assembling the highest-scoring ones into a team
is not an effective method for team selection.
For a broad class of performance measures, based on the expected maximum of random variables representing individual candidates,
we show that tests directly measuring individual performance are
indeed ineffective,
but that a more subtle family of tests used in isolation can
provide a constant-factor approximation for team performance.
These new tests measure the ``potential'' of individuals, in a precise
sense, rather than performance; to our knowledge they represent the
first time that individual tests have been shown to produce near-optimal
teams for a non-trivial team performance measure.
We also show families of subdmodular and supermodular
team performance functions for which {\em no} test applied to
individuals can produce near-optimal teams, and
discuss implications for submodular maximization
via hill-climbing.

\end{abstract}

\maketitle

\section{Introduction}

The performance of teams in solving problems has been a subject of
considerable interest in multiple areas of the mathematical social sciences
\cite{gully-teams-meta-analysis,kozlowski-team-performance,wuchty-teams-knowledge}.
The ways in which groups of people come together and accomplish tasks
is an important issue in theories of organizations, innovation, and
other collective phenomena, and the recent growth of interest
in crowdwork has brought these issues into focus for on-line platforms as well.

In formal models of team performance, a central issue is the problem
of {\em team selection}.  Suppose there is a task to be accomplished and
we can assemble a team to collectively work on this task, drawing
team members from a large set $U$ of $n$ candidates.  
(We can think of $U$ as the job applicants for this task.)  
A team can be any subset $T \subseteq U$, and its
performance in collectively working on the task is given by a set
function $g(T)$.  The central optimization problem is therefore a kind of
set function maximization: given a target size $k < n$ for the team,
we would like to find a set $T$ of cardinality $k$ for which $g(T)$
is as large as possible.

The generality of this framework has meant that it can be used
to reason about a wide range of settings in which we hire workers,
solicit advice from a committee, 
run a crowdsourced contest,
admit college applicants, and many other activities
--- all cases where we have an objective function (the outcome
of the work performed, the quality of the insights obtained,
or reputation of the group that is assembled) that is a function of
the set of people we bring together.

\xhdr{Models of Team Performance}
Different models of team performance can be interpreted as positing
different forms for the structure of the set function $g(\cdot)$.
Some of the most prominent have been the following.
\begin{itemize}
\item {\em Cumulative effects.}  Arguably the simplest team performance
function is a linear one: each individual can produce work at a certain
volume, and the team's performance is simply the sum of these
individual outputs.  
Formally, we assume that each individual $i \in U$ has
a weight $w_i$, and then $g(T) = \sum_{i \in T} w_i$.
\item {\em Contests.}  Much work has focused on models of
team performance in which the ``team'' is highly decoupled:
members attempt the task independently, and the quality of the outcome
is the maximum quality produced by any member.
Such formalisms arise in the study of contest-like processes, where
many competitors independently contribute proposed solutions, and
a coordinator selects the best one
(or perhaps the $h$ best for some $h < k$) 
\cite{jeppesen-team-contest,lakhani-sci-contests}.
Note however that this objective function is applicable more generally
to any setting with a ``contest structure,'' even potentially
inside a single organization, where proposed solutions are generated 
independently and the outcome is judged by the quality of the best one
(or best few).
It can also apply to a group whose reputation is judged on the
maximum future achievement of any of its members; for example, one could
imagine an admissions committee trying to select a group of
$k$ top applicants, with the goal of optimizing the maximum future success
of any of them.
\item {\em Complementarity.}  Related to contests are models in which
each team member has a set of ``perspectives,'' and the quality of the
team's performance grows with the number of distinct perspectives
that they are collectively able to provide 
\cite{hong-page-diversity-pnas,marcolino-tambe-team-formation}.
\item {\em Synergy.}  In a different direction, research has also
considered models of team performance in which interaction
is important, using objective functions with terms that generate 
value from pairwise interaction between team members
\cite{ballester-key-player}.
\end{itemize}

These settings are not just different in their motivation; they
rely on functions $g(\cdot)$ with genuinely 
different combinatorial properties.
In particular, in the language of set functions,
the first class of instances is based on {\em modular} (i.e. linear)
functions, the second and third classes are based on
{\em submodular} functions, and the fourth is based on
{\em supermodular} functions.

The second and third classes of functions --- contests and complementarity ---
play a central role in Scott Page's highly influential line of work on
the power of diversity in team performance 
\cite{page-difference-book}.
The argument, in essence, is that a group with diversity that is
reflected in independent solutions or complementary perspectives can 
often outperform a group of high-achieving but like-minded members.

\xhdr{Evaluating Team Members via Tests}
A key issue that Page's work brings to the fore is the question
of {\em tests} and their effectiveness in identifying good team members
\cite{page-difference-book}.
In most settings one can't ``preview'' the behavior of a set of team
members together, and so a fundamental approach to team formation
is to give each 
candidate $i \in U$ a {\em test}, resulting in a test score $f(i)$
\cite{miller-team-tests}.
It is natural to then select the $k$ candidates with the highest
test scores, resulting in a team $T$.
We could think of the test score $f(i)$ corresponding to the SAT or GRE
score in the case of college or graduate school admissions, or corresponding
to the quality of answers to a set of technical interview questions
in a job interview.
We note that this issue of tests as a method of selection is
a contribution of Page's work that is related to the issue of
diversity, but also has interesting implications independently
of diversity, and it is the properties of tests that serves
as our focus in the present paper.

Should we expect that the $k$ individuals who score highest on the
test will indeed make the best team?
In a simple enough setting, the answer is yes --- for modular functions
$g(T) = \sum_{i \in T} w_i$, it is enough to evaluate each 
candidate $i$ in isolation, applying the test $f(i) = g(\{i\}) = w_i$.
Let us refer to $f(i) = g(\{i\})$ in general as the {\em canonical test} ---
we simply see how $i$ would perform as a one-element set.
For modular functions, 
clearly the $k$ candidates with the highest scores under the canonical
test form the best team.

On the other hand, Hong and Page construct an example, based on
complementarity, in which the $k$ candidates who score highest on
the canonical test perform significantly worse as a team than
a set of $k$ randomly selected candidates
\cite{hong-page-diversity-pnas}
Their mathematical analysis has a natural interpretation with 
implications for hiring and admissions processes: 
the $k$ candidates who score highest on the test are too
similar to each other, and so with an objective function based
on complementarity, they collectively represent many fewer perspectives
than a random set of $k$ candidates.

Beyond these compelling examples, however, there is very little 
broader theoretical understanding of the power of tests in selecting teams.
Thinking of tests as arbitrary functions of the candidates is
not a perspective that has been present in this earlier work;
a particularly unexplored issue is the fact that the failure of the
canonical test doesn't necessarily rule out the possibility that other
tests might be effective in assembling teams.
Does it ever help, in a formal sense, 
to evaluate a candidate using a measure $f(i)$
that is different from his or her actual individual performance at the task?
In real settings, we see many cases where employers, search committees,
or admissions committees evaluate applicants on their ``potential''
rather than on their demonstrated performance ---
is this simply a practice that has evolved for reasons of its own, or
does it have a reflection in a formal model of team selection?
Without a general formulation of tests as a means for evaluating
team members, it is difficult to offer insights into these basic questions.

\xhdr{The Present Work: Effective Tests for Team Selection}
In this paper we analyze the power of general tests in forming
teams across a range of models.
Our main result is the finding that for team performance measures 
that have a contest structure, near-optimal teams can be selected
by giving each candidate a test in isolation, and then ranking
by test scores, but {\em only} using
tests that are quite different from the canonical test.
To our knowledge, this is the first result to establish that
non-standard tests can yield good team performance in settings
where the canonical test provably fails.

In more detail, in a contest structure each candidate $i \in U$
has an associated discrete random variable $X_i$, with all random variables
mutually independent, and the performance of a team $T \subseteq U$
is the expected value of the random variable $\max_{i \in T} X_i$.
More generally, we may care about the top $h$ values, 
for a parameter $h < k$, in which case the performance of $T$
is the expected value of the sum of the $h$ largest random variables
in $T$: $$g(T) = \Exp{\max_{S \subseteq T, |S| = h} \sum_{i \in S} X_i}.$$

The test that works well for these contest functions has a natural
and appealing interpretation.
Focusing on the general case with parameter $h < k$, we
define the test score $f(i)$ to be 
$$\Exp{\max(X_i^{(1)}, X_i^{(2)}, \ldots, X_i^{(k/h)})},$$
where $X_i^{(1)}, X_i^{(2)}, \ldots, X_i^{(k/h)}$ represent 
$k/h$ independent random variables all with the same distribution as $X_i$.

The fact that this test works for assembling near-optimal teams
in our contest setting has a striking interpretation ---
it provides a formalization of the idea that we should indeed sometimes
evaluate candidates on their potential, rather than their 
demonstrated performance.
Indeed, $\max(X_i^{(1)}, X_i^{(2)}, \ldots, X_i^{(k/h)})$ is precisely
a measure of potential, since instead of just evaluating
$i$'s expected performance $\Exp{X_i}$, we're instead
asking, ``If $i$ were allowed to attempt the task $k/h$ times
independently, what would the best-case outcome look like?''
Like the argument of Hong and Page about diversity, this argument
about potential has qualitative implications for evaluating candidates
in certain settings --- that we should think about upside
potential using a thought experiment in which candidates are
allowed multiple independent tries at a task.

Following this result, we then prove a number of other theorems
that help round out the picture of general tests and their power.
We first show a closely related test that also provides a
method for constructing near-optimal teams, in which 
$f(i)$ is defined to be the conditional expectation of $X_i$,
conditioned on its taking a value in the top $(1/k)$ fraction of
its distribution.
We also show that there exists an absolute constant $c > 1$
such that {\em no} test can construct teams 
under our objective
function with performance guaranteed to come within a factor $c$ 
of optimal.

Next, we show that there are natural objective functions for which no test
can yield near-optimal results for team selection --- these
include certain submodular functions capturing complementarity
and certain supermodular functions representing synergy.
Note that this is a much stronger statement than simply asserting
the failure of the canonical test, since it says that {\em no test}
can produce near-optimal teams.
Finally, we identify some further respects in which 
team performance functions $g(\cdot)$ based on contest structures
have tractable properties, in particular showing that for the
special case in which the random variables corresponding to all the candidates
are weighted Bernoulli variables, greedy hill-climbing on the
value of $g(\cdot)$ in fact
produces an exactly optimal set of size $k$.

\xhdr{The Power of Tests in Competitive Settings} 
Our discussion of test scores can be viewed as pursuing a family of questions of the following general form: ``When evaluating the effectiveness of an individual, to what extent can we perform this evaluation in isolation, and to what extent do we need the context in which they are operating?'' 

This type of question can be asked in settings other than team formation, and in the final section we show how it leads to interesting results if we ask it in a setting with competition between individuals. Specifically, suppose we have a collection of {\em competitors}, and these competitors will be matched up in pairwise competitions. Each competitor $i$ is represented by a random variable $X_i$, representing the distribution of performance quality that $i$ exhibits in competition. When $i$ and $j$ are paired in a competition, we imagine that they draw values independently from $X_i$ and $X_j$ respectively, and the competitor who draws the larger value wins. (We'll say that they tie if the values drawn are equal.) Thus the probability that $X_i$ wins or ties is $\mathbb{P}(X_i \geq X_j)$. 

We'd like to assign each competitor with random variable $X$ a {\em score} $f(X)$, based only on $X$ and not any of the other random variables, so that when two competitors are paired up, the one with the higher score has a reasonably large probability of winning (or tieing). In other words, we'd like to find a function $f$ defined on arbitrary random variables, and an absolute constant $c > 0$, such that if $f(X) \geq f(Y)$, then $\mathbb{P}(X \geq Y) \geq c$. 

Is this possible, and if so, how large can we make $c$? We give a tight answer to this question: the largest possible $c$ is $c = 1/4$. To do this, we first establish $c = 1/4$ can be achieved by the function $f$ that maps each $X$ to its median $f(X)$. We then establish that $c$ cannot be any larger using an argument based on the notion of {\em non-transitive dice}. 

We feel that the emergence of rich questions in this very different domain suggests that there may be other unexpected settings in which an understanding of test scores might lead to interesting insights.

\section{Team Selection by Test Score}
In this section, we formalize our goal of picking individual via a test score to maximize a notion of team performance. We precisely define our measure of team performance, and also define a test that can be applied to individuals for team selections. This test is particularly remarkable, because no matter the size of the team we pick using this test, we can give a constant (independent of team size) order performance guarantee on our test selected team compared to the optimal team. The latter parts of this section build the necessary mathematical tools and definitions needed, and then prove this result. 

In doing so, we build on basic properties of the maximum over sets of random variables, and expect that these results will be useful more broadly.

\subsection{Problem Setting and Key Definitions}
Suppose we are trying to assemble a team of fixed size $k$. We have $N$ possible candidates for this team, each associated with a non-negative discrete random variable $X_i$. Each $X_i$ represents the latent ability of the candidate. For example, if $X_i$ took values $(1, 0.4, 0)$ with probabilities $(0.75, 0.2, 0.05)$, candidate $i$, when put to test, will most likely (with probability $0.75$) perform with skill $1$, and with lower chance (probability $0.2$) perform with skill $0.4$. There is also a small chance (probability $0.05$) that they might perform very poorly, with skill $0$. Setting up notation, we assume each $X_i$ has a distribution $(p_1,...,p_n)$ over nonnegative values $(x_1,...,x_n)$, with $x_1 > x_2,...> x_n \geq 0$. 

To select our team, we can test any of our candidates \textit{individually} but not as a group. Testing a candidate individually corresponds to applying a scoring function $f(X_i)$ to the random variable $X_i$ representing the candidate. We can then rank candidates according to their scores, and pick the top $k$ to form our team. The performance of our team is measured by a \textit{team} scoring function $g$. 

Our work first looks at devising a test function $f$ when the team scoring function $g$ is the \textit{expected maximum}. Having picked our team to comprise of $X_1,...,X_k$, the team performance is given by
\[ g(X_1,...,X_k) = \mathbb{E}(\max\{X_1,...,X_k \}) \]
If the team scoring function is the expected maximum, an immediate first candidate for $f$ might be the expectation, $f(X_i) = \mathbb{E}(X_i)$, which we refer to as the \textit{canonical test}. However, as discussed in Section 3, this first choice is highly suboptimal: we can show that picking a team according to this test results in a multiplicative factor $k$ performance difference between the chosen team and the optimal team. Instead, we define the following, more subtle test. Let $X^{(i)}$ be iid copies of the random variable $X$. Then:
\[ f(X) = \mathbb{E}\left(\max(X^{(1)},...,X^{(k)}) \right) \]
We can interpret $f$ as a better test of the \textit{potential} of $X$, where instead of taking the expectation, we take the best effort when $X$ is given multiple ($k$) attempts. Remarkably, picking a team according to this test results in a \textit{constant factor} (independent of $k$) guarantee on the chosen team's performance compared to the optimal team.

In the following subsections, we build towards and culminate with a proof of this result. In fact, we work with a more general individual test function $f$, and team scoring function $g$:

\begin{definition} \emph{(Team Performance Scoring Function)} \label{def:metric} For (nonnegative) random variables $X_1,...,X_k$, and for $i \leq k$, let $X_{(X_1,...,X_k)}^{(i)}$ denote the $i^{\mathrm{th}}$ largest random variable out of $X_1,...,X_k$. Then for  $1 \leq h \leq k$, let:
$$ g_h(X_1,...,X_k) = \mathbb{E} \left( X_{(X_1,...,X_k)}^{(1)} + X_{(X_1,...,X_k)}^{(2)} + ... + X_{(X_1,...,X_k)}^{(h)} \right) $$
\end{definition}

\begin{definition} \emph{(Individual Testing Function)}
For a nonnegative discrete random variable $X$, and $h \leq k$, let
$$f_h(X) = \mathbb{E}\left(\max(X^{(1)},...,X^{(\frac{k}{h})}) \right) $$
where $X^{(i)}$ denotes an iid copy of $X$.
\end{definition}

These definitions provide a natural interpolation between potential and expected performance. For $h=1$, the team performance function $g$ again becomes the expected maximum, and similarly the individual scoring function $f$ is the corresponding `potential' test function defined earlier. Recall that in this setting, the canonical test (the test of expected performance), is a very poor test for assembling a team. However for $h=k$, the team performance function becomes $\mathbb{E}(\sum X_i)$, and the individual testing function collapses to the canonical test $\mathbb{E}(X_i)$. But as $\mathbb{E}(\sum X_i) = \sum \mathbb{E}(X_i)$, the canonical test is in this case the \textit{perfect} test.

\subsection{Preliminary Mathematical Results: The top $h/2k$ quantile}
In the previous section, we defined our general team performance scoring function (for $h = 1$, the expected maximum and more generally the expectation of the sum of the top $h$ performances of our team of size $k$), and our corresponding individual test function (for $h=1$ the expected maximum of $k$ copies of $X$ and more generally the expectation of the sum of the top $h$ performances of $k$ copies of $X$). 

In this section, we derive important definitions and lemmas to allow us to prove the central result relating team performance when selecting with our test function: the constant factor performance guarantee with respect to the optimal team. Central to all of these is the notion of the \textit{top quantile} of a random variable's distribution. Intuitively speaking, for some proportion $t$, we can define the top $t$ quantile of a discrete random variable to be the largest values taken by the random variable that are responsible for proportion $t$ of its probability mass. Returning to our example of $X_i$ with values $(1, 0.4, 0)$ and probabilities $(0.75, 0.2, 0.05)$, the top $0.6$ quantile of $X_i$ would be $\{1\}$, as $X_i$ takes value $1$ with probability $> 0.6$. The top $0.8$ quantile of $X_i$ would be $\{1, 0.4\}$, as the probability mass of $1$ alone is less than $0.8$, but the probability mass of both values combined is $> 0.8$.

To formalize this, we turn to the notion of a random variable's sample space, treating our random variable $X$ as a function on \textit{events} $\omega \in [0, 1]$. We formalize this in the definition below.\footnote{We note that 
some of our basic definitions can be expressed in the
language of {\em order statistics}, in which we take a set of 
given random variables $X_1, \ldots, X_n$, and a parameter $k$, and 
we construct a new random variable equal to the
$k^{\rm th}$ largest value among $X_1, \ldots, X_n$ 
\cite{david-order-statistics}.
However, for our purposes, the general results about order statistics
do not seem to provide more direct ways of handling any of the constructs
in our analysis, and so we instead use the presentation 
developed in this section.}

\begin{definition} \label{obs: sample_space}
For a nonnegative discrete random variable $X$, we define, for $\omega \in [0,1]$,
$$X(\omega) =
\left\{
	\begin{array}{ll}
		x_1 & \mbox{if } \omega > 1 - p_1 \\
		x_l & \mbox{if }  1 - \sum_{i=1}^l p_i < \omega \leq \ 1 - \sum_{i=l-1}^n p_i \\
		0  & \mbox{if } \omega \leq 1 - \sum_{i=1}^n p_i \\	
	\end{array}
\right.$$
\end{definition}

With this definition, we can also make precise what we mean by the {\em top values} of $X$:

\begin{definition} \label{def: top-values}
For nonnegative discrete $X$ with sample space $[0,1]$, the event $A$ that $X$ takes values in its top $h/2k$ quantile is 
$$ A = \{ \omega : \omega > 1 - \frac{h}{2k} \}$$ 
The {\em top values} of $X$ are then
$$ \{ x_i : 1 \leq i \leq n, \exists \omega \in A, X(\omega) = x_i \}$$
Similarly, we can define the {\em tail values} to be
$$ \{ x_i : 1 \leq i \leq n, \exists \omega \in A^c, X(\omega) = x_i \}$$
\end{definition}
Returning to our example, if $h/2k = 0.6$, then the top values of $X$ would be $\{1\}$, and the tail values would be $\{1, 0.4, 0 \}$. If $h/2k = 0.4$, then the top values would be $\{1, 0.4\}$, and the tail values $\{0.4, 0\}$. Note that there are values that appear in both top and tail in both the top and tail values, and indeed more generally, that the top values and tail values are usually not disjoint -- for the boundary value $x_t$, we may have to split $\{ \omega : X(\omega) = x_t \}$ into $A$ and $A^c$.

\def\stath{{\left(1 - (1 - q_1)^{k/h} \right)x_1 + \left((1 - q_1)^{k/h} - (1 - q_2)^{k/h} \right)x_2 + ... + \left((1 - q_{n-1})^{k/h} - (1 - q_n)^{k/h} \right)x_n}}

\def\stat1{{\left(1 - (1 - q_1)^{k} \right)x_1 + \left((1 - q_1)^{k} - (1 - q_2)^{k} \right)x_2 + ... + \left((1 - q_{n-1})^{k} - (1 - q_n)^{k} \right)x_n}}

\def\truncstat{{\left(1 - (1 - q_1)^{k/h} \right)x_1 +  ... + \left((1 - q_{i-1})^{k/h} - (1 - q_i)^{k/h} \right)x_i}}

\def\truncstat1{{\left(1 - (1 - q_1)^{k} \right)x_1 + ... + \left((1 - q_{i-1})^{k} - (1 - q_i)^{k} \right)x_i}}

\def\mre{{1 - \frac{1}{\sqrt{e}}}}
\def\m4re{{1 - \frac{1}{\sqrt[4]{e}}}}

\def\emax{{ \mathbb{E} \left( \max(X_1,...,X_k) \right)}}
\def\emaxh{{ \mathbb{E} \left( \max(X_1,...,X_{k/h} \right) }}


Before proceeding with the lemmas, we make a short comment on notation: from now on, all random variables $X$ are assumed to be discrete and nonnegative, with probabilities $(p_1,...,p_n)$ over values (in decreasing order) $(x_1,...,x_n)$. We define $q_i$ to be the \textit{cumulative sum} of the top $i$ probabilities, i.e. 
\[ q_i = \sum_{l=1}^i p_l \]
We will also often use $(x_1,...,x_t)$ to denote the top values of $X$, with the probability mass associated with $x_t$ split so that $q_t = \frac{h}{2k}$ exactly.

Our first two lemmas rely on the explicit form of our testing function $f_h$. In particular, with the definition of $q_i$, we have:
\[ f_h(X) = ((1 - (1 - q_1)^{k/h})x_1 + ((1 - q_1)^{k/h} - (1 - q_2)^{k/h})x_2 + ... + ((1 - q_{n-1})^{k/h} - (1 - q_n)^{k/h})x_n) \]
In the first two lemmas, we (1) bound the proportion that the top $h/2k$ quantile contributes to $f_h(X)$, (2) upper bound the contribution of the tail values of $X$ to $f_h(X)$. Splitting according to the top $h/2k$ is important as for the main result, we bound $g_h$ by $f_h$ by evaluating the top and tail contributions separately.

\begin{lemma} \label{lemma: topvalbound}
Let $X$ be a random variable, with underlying sample space $[0,1]$. Define $X'$ as
$$ X'(\omega) =
\left\{
	\begin{array}{ll}
		X(\omega)  & \mbox{if } \omega > 1 - \frac{h}{2k} \\
		0 & \mathrm{o/w}
	\end{array}
\right. $$
Then 
$$f_h(X') \geq f_h(X) \left( \mre \right) $$
\end{lemma}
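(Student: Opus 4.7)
The plan is to use a coupling between the $k/h$ i.i.d.\ copies of $X$ and of $X'$ on a common sample space, using the representation from Observation~\ref{obs: sample_space}. I would draw $\omega_1, \ldots, \omega_{k/h}$ i.i.d.\ uniformly on $[0,1]$ and set $X^{(j)} = X(\omega_j)$, $X'^{(j)} = X'(\omega_j)$, so that $X'^{(j)} = X^{(j)}$ when $\omega_j \in A := \{\omega : \omega > 1 - h/(2k)\}$ and $X'^{(j)} = 0$ otherwise. Writing $M = \max_j X^{(j)}$ and $M' = \max_j X'^{(j)}$, so that $\mathbb{E}[M] = f_h(X)$ and $\mathbb{E}[M'] = f_h(X')$, the claim reduces to the single inequality $\mathbb{E}[M - M'] \leq (1/\sqrt{e})\, f_h(X)$.

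The first key step is to observe that $M = M'$ on the event $A^{*} := \{\exists j : \omega_j \in A\}$. By Definition~\ref{def: top-values}, $\omega_j \in A$ forces $X(\omega_j) \geq x_t$ while $\omega_j \in A^c$ forces $X(\omega_j) \leq x_t$, so on $A^{*}$ the overall maximum is attained at some coordinate with $\omega_j \in A$, where $X^{(j)}$ and $X'^{(j)}$ agree. Since $M' = 0$ on $A^{*c}$, this yields the clean split
\[
\mathbb{E}[M - M'] \;=\; \mathbb{P}(A^{*c}) \cdot \mathbb{E}[M \mid A^{*c}].
\]

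The second step is to bound each factor separately. For the probability, the elementary inequality $(1 - x)^{1/x} \leq 1/e$ on $(0,1)$, applied at $x = h/(2k)$, gives $\mathbb{P}(A^{*c}) = (1 - h/(2k))^{k/h} \leq 1/\sqrt{e}$. For the conditional expectation, note that $X(\omega)$ is non-decreasing in $\omega$ by its piecewise definition; conditional on $A^{*c}$ the $\omega_j$ are i.i.d.\ uniform on $[0, 1 - h/(2k)]$, so under the coupling $\omega_j = \tilde\omega_j \cdot (1 - h/(2k))$ with $\tilde\omega_j$ i.i.d.\ uniform on $[0,1]$, monotonicity gives $X(\omega_j) \leq X(\tilde\omega_j)$ pointwise. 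Taking the max over $j$ and then the expectation yields $\mathbb{E}[M \mid A^{*c}] \leq f_h(X)$, and multiplying the two bounds gives the required inequality.

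The one delicate point I expect is the treatment of the boundary value $x_t$: the cutoff $\omega = 1 - h/(2k)$ typically lies strictly inside the interval $\{\omega : X(\omega) = x_t\}$, so $x_t$ appears simultaneously among both the top and the tail values. The splitting convention in Definition~\ref{def: top-values} is exactly what lets $X(\omega) \geq x_t$ on $A$ and $X(\omega) \leq x_t$ on $A^c$ hold together, which is what makes the equality $M = M'$ on $A^{*}$ go through without an additive slack of order $x_t$. Beyond this bookkeeping, the argument is nothing more than the coupling identity above combined with the standard bound on $\mathbb{P}(A^{*c})$.
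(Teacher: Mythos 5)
Your proof is correct and is essentially the paper's argument viewed from the complementary side: the paper conditions on the event $B$ that some copy lands in the top $h/2k$ quantile and uses $\mathbb{E}[\max \mid B] \geq f_h(X)$ together with the identity $f_h(X') = \mathbb{P}(B)\,\mathbb{E}[\max \mid B]$ and the bound $\mathbb{P}(B) > 1 - 1/\sqrt{e}$, which is exactly your decomposition $\mathbb{E}[M - M'] = \mathbb{P}(A^{*c})\,\mathbb{E}[M \mid A^{*c}]$ rearranged. Your explicit coupling $\omega_j = \tilde\omega_j(1 - h/(2k))$ is a cleaner justification of the dominance step than the paper's parenthetical remark, but the underlying idea and constants are identical.
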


\begin{proof}
First note that if $B$ is the event that some $X^{(i)}$ in the $k/h$ copies of $X$ in $f_h(X)$ takes one of its top values, $(x_1,...,x_t)$, then certainly 
$$f_h(X|B) = \mathbb{E}\left(\max(X^{(1)},...,X^{(k/h)})| B \right) \geq f_h(X) $$
(as we are conditioning on an event concentrated on the highest possible values). But the left hand side can be written out in full as
$$\frac{1}{(1 - (1 - q_t)^{k/h})} \left( \left(1 - (1 - q_1)^{k/h} \right)x_1 + ... + \left((1 - q_{t-1})^{k/h} - (1 - q_t)^{k/h} \right)x_t \right) \geq f_h(X)$$
But this is just
$$ \frac{1}{(1 - (1 - q_t)^{k/h})} \cdot f_h(X')  \geq f_h(X)$$
Noting that $1 - (1 - q_t)^{k/h} \geq ( \mre )$ gives the result.
\end{proof}

We have therefore shown that a transformation mapping $X$ to $X'$, non zero only on the top $h/2k$ quantile of $X$, does not result in too large a loss in the value of $f_h(X)$. 

\begin{lemma} \label{lem: tail-bound}
Let $X$ have $(x_1,...,x_t)$ as its top values, with $q_t = \frac{h}{2k}$. Then
$$ x_l < \frac{f_h(X)}{\mre}$$
for any $ l \geq t$
\end{lemma}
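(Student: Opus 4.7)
The plan is to bound $x_t$ directly and then invoke monotonicity of the top values. Since the values are listed in decreasing order, $x_l \leq x_t$ for every $l \geq t$, so it suffices to prove the bound for $x_t$.

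To bound $x_t$, I would lower-bound $f_h(X)$ by restricting attention to the event that at least one of the $k/h$ i.i.d.\ copies $X^{(1)},\ldots,X^{(k/h)}$ attains a top value, i.e.\ a value $\geq x_t$. On that event the maximum is at least $x_t$, hence
$$ f_h(X) \;=\; \mathbb{E}\!\left(\max(X^{(1)},\ldots,X^{(k/h)})\right) \;\geq\; x_t \cdot \Bigl(1 - (1 - q_t)^{k/h}\Bigr). $$
Since $q_t = h/2k$ by assumption, the probability factor becomes $1 - (1 - h/2k)^{k/h}$. Using $1 - x < e^{-x}$ strictly for $x > 0$ with $x = h/2k$, we get $(1 - h/2k)^{k/h} < e^{-1/2} = 1/\sqrt{e}$, which is precisely the estimate already used in the discussion of the top $h/2k$ quantile (and matches the bound $\mathbb{P}(A) > 1 - 1/\sqrt{e}$ derived there).

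Combining these two steps yields $f_h(X) > x_t \bigl(1 - 1/\sqrt{e}\bigr)$, and therefore $x_t < f_h(X)/(1 - 1/\sqrt{e})$. By the monotonicity observation in the first paragraph, the same strict inequality holds for every $x_l$ with $l \geq t$, which is the desired conclusion.

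The main content is really just the one-line observation that $f_h(X) \geq x_t \cdot \mathbb{P}(\text{some copy lands in the top quantile})$; no cancellation or telescoping as in Lemmas~\ref{lemma: topvalbound}--\ref{lemma: top_vals_cond_ub} is needed. I do not anticipate a genuine obstacle; the only subtlety is keeping the inequality strict, which is automatic from the strict inequality $1 - x < e^{-x}$ for $x > 0$.
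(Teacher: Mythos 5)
Your proposal is correct and follows essentially the same route as the paper: both rest on the single inequality $f_h(X) \geq x_t\bigl(1 - (1-q_t)^{k/h}\bigr)$, the estimate $1 - (1 - h/2k)^{k/h} > 1 - 1/\sqrt{e}$, and the monotonicity $x_l \leq x_t$ for $l \geq t$. No gaps.
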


\begin{proof} 
Note that
$$\left( 1 - (1 - q_t)^{k/h} \right)x_t \leq f_h(X) $$
The Lemma then follows by noting that $\left( 1 - (1 - q_t)^{k/h} \right) > \mre $, and that $x_l \leq x_t$ for $l \geq t$.
\end{proof}

Next we prove a simple lemma on certain functions increasing in value, and then invoke this lemma to show that for random variables with total probability mass corresponding to positive values less than $h/2k$, we can bound our test function $f_h$ with respect to the canonical test of expected value, and with respect to a conditional expectation. Again, these lemmas will bound specific parts of bounds relating $f_h$ and $g_h$. 

\begin{lemma} \label{lemma: inc-fns}
For $a \geq 1$, the functions
$$(1 - x)^a - (1 - ax)$$
and
$$\left(1 - \frac{a}{2}x \right) - (1 - x)^a$$
are increasing for $x \in \left[0, \displaystyle{\frac{1}{2a}} \right]$
\end{lemma}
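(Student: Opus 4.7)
The plan is to prove both monotonicity claims by differentiating and showing the derivative is nonnegative on the interval $[0, 1/(2a)]$. Denote the two functions by
$$\varphi_1(x) = (1-x)^a - (1-ax), \qquad \varphi_2(x) = \bigl(1 - \tfrac{a}{2}x\bigr) - (1-x)^a.$$
Computing directly,
$$\varphi_1'(x) = a\bigl(1 - (1-x)^{a-1}\bigr), \qquad \varphi_2'(x) = a\bigl((1-x)^{a-1} - \tfrac{1}{2}\bigr).$$
The first derivative is nonnegative as long as $(1-x)^{a-1} \leq 1$, which holds for every $a \geq 1$ and $x \in [0,1]$. So $\varphi_1$ is nondecreasing on all of $[0,1]$, and in particular on $[0,1/(2a)]$; this half of the lemma is immediate.

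The real content is the second inequality: I need to show that $(1-x)^{a-1} \geq 1/2$ whenever $a \geq 1$ and $x \in [0, 1/(2a)]$. Taking logs, the claim becomes $(a-1)\ln\frac{1}{1-x} \leq \ln 2$. I would use the elementary bound $-\ln(1-x) = \sum_{j\geq 1} x^j/j \leq x + x^2 + \cdots = \frac{x}{1-x}$, which together with $x \leq 1/(2a)$ gives
$$(a-1)\ln\frac{1}{1-x} \;\leq\; \frac{(a-1)x}{1-x} \;\leq\; \frac{(a-1)/(2a)}{1 - 1/(2a)} \;=\; \frac{a-1}{2a-1} \;<\; \frac{1}{2}.$$
Since $\tfrac{1}{2} < \ln 2$, this yields $(1-x)^{a-1} > e^{-1/2} > 1/2$, so $\varphi_2'(x) \geq 0$ throughout the interval.

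The main (and really only) obstacle is the second monotonicity: one has to verify that the chord $1 - ax/2$ does not drop below $(1-x)^a$ in slope on the interval, and the tight quantity $1/(2a)$ is exactly calibrated so that this holds with the $1/\sqrt{e}$ slack appearing later in the argument. Once the bound $(a-1)/(2a-1) < 1/2 < \ln 2$ is isolated, both conclusions follow from the sign of the derivative.
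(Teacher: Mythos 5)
Your proof is correct and follows the same route as the paper's: differentiate both functions, observe that the first derivative reduces to $a\bigl(1-(1-x)^{a-1}\bigr)\ge 0$, and reduce the second to the inequality $(1-x)^{a-1}\ge \tfrac12$ on $\left[0,\tfrac{1}{2a}\right]$. The only difference is that the paper simply asserts this last inequality at the endpoint $x=\tfrac{1}{2a}$, whereas you actually verify it via $(a-1)\ln\frac{1}{1-x}\le \frac{a-1}{2a-1}<\frac12<\ln 2$, which is a welcome filling-in of the omitted step.
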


\begin{proof}
Differentiating, and removing the positive factor of $a$, we have
$$1 - (1 - x)^{a - 1}$$
which is $\geq 0$ for $x \in [0,1] $ and 
$$(1 - x)^{a-1} - \frac{1}{2}$$
which achieves its minimum value at $x = \frac{1}{2a}$ but remains nonnegative for $a \geq 1$.
\end{proof}

\begin{lemma} \label{lem:rare_potbound}
For a random variable $X$, with total probability mass for positive values $\leq h/2k$ (i.e. $q_n \leq h/2k$), we have
$$ \frac{hf_h(X)}{k} \leq \mathbb{E}(X) \leq \frac{2hf_h(X)}{k} $$
\end{lemma}

\begin{proof}
$f_h(X)$ can be written explicitly as
$$ \stath $$
Noting that $q_i < q_{i+1}$, a straightforward application of Lemma \ref{lemma: inc-fns} gives
$$ \frac{kp_{i+1}}{2h} \leq (1 - q_i)^{k/h} - (1 - q_{i+1})^{k/h} \leq \frac{k}{h}p_{i+1}$$

Substituting this into the expression for $f_h(X)$ gives 
$$\frac{k\mathbb{E}(X)}{2h} = \sum_{i=1}^n \frac{kp_{i}x_i}{2h} \leq f_h(X) \leq \sum_{i=1}^k \frac{kp_ix_i}{h} = \frac{k\mathbb{E}(X)}{h} $$
\end{proof}

\begin{lemma} \label{lemma: top_vals_cond_ub}
For a random variable $X$, underlying sample space $[0,1]$, let $A$ be as in Definition \ref{def: top-values}. Then
$$\mathbb{E}(X|A) \leq 4f_h(X) $$
\end{lemma}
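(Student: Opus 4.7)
The plan is to reduce this to Lemma \ref{lem:rare_potbound} by passing through the truncated random variable $X'$ defined in Lemma \ref{lemma: topvalbound}, namely $X'(\omega) = X(\omega)$ when $\omega > 1 - h/2k$ and $X'(\omega) = 0$ otherwise.

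First I would observe that, because $\mathbb{P}(A) = h/2k$ and $X'$ agrees with $X$ on $A$ and vanishes on $A^c$, we have the exact identity $\mathbb{E}(X') = \mathbb{P}(A)\cdot\mathbb{E}(X \mid A) = \frac{h}{2k}\,\mathbb{E}(X \mid A)$. Equivalently, $\mathbb{E}(X \mid A) = \frac{2k}{h}\,\mathbb{E}(X')$. This shifts the task from bounding a conditional expectation to bounding $\mathbb{E}(X')$ in terms of $f_h(X)$.

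Next I would apply Lemma \ref{lem:rare_potbound} directly to $X'$. The hypothesis of that lemma is exactly that the total probability mass is at most $h/2k$, which holds for $X'$ by construction (its support above $0$ has mass $q_t = h/2k$). The upper bound in the lemma then gives $\mathbb{E}(X') \leq \frac{2h}{k}\,f_h(X')$. Combining with the previous step yields $\mathbb{E}(X \mid A) \leq 4 f_h(X')$.

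Finally I would show $f_h(X') \leq f_h(X)$, closing the gap between $X'$ and the original variable. This is immediate from the pointwise inequality $X' \leq X$: if we couple the $k/h$ independent copies of $X'$ and of $X$ using the same underlying $\omega$'s in $[0,1]$, then the maximum of the $X'$-copies is at most the maximum of the $X$-copies on every outcome, so the expected maxima satisfy $f_h(X') \leq f_h(X)$. Chaining the three bounds gives $\mathbb{E}(X\mid A) \leq 4 f_h(X)$. The whole argument is short and the only nontrivial ingredient is the correct invocation of Lemma \ref{lem:rare_potbound}; there is no real obstacle beyond verifying that its ``rare event'' hypothesis is satisfied by the truncation $X'$, which it is by design.
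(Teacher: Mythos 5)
Your proof is correct and follows essentially the same route as the paper: both pass to the truncation $X'$, use $\mathbb{E}(X\mid A) = \frac{2k}{h}\mathbb{E}(X')$, and bound $\frac{k}{2h}\mathbb{E}(X') \leq f_h(X') \leq f_h(X)$. The only cosmetic difference is that you invoke Lemma \ref{lem:rare_potbound} as a black box where the paper re-derives the same inequality directly from Lemma \ref{lemma: inc-fns}, which is arguably a cleaner reuse of what has already been proved.
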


\begin{proof}
Splitting the the boundary value $x_t$ if necessary, assume $q_t= \frac{h}{2k}$. But then for $X'$ as in Lemma \ref{lemma: topvalbound}
$$ f_h(X') = \left(1 - (1 - q_1)^{k/h} \right)x_1 + ... + \left((1 - q_{t-1})^{k/h} - (1 - q_t)^{k/h} \right)x_t \leq f_h(X) $$
As $q_t = \frac{h}{2k}$, we can use Lemma \ref{lemma: inc-fns} (with $a = k/h$, $q_i \in [0, k/2h], i \leq t$) to get
$$ \frac{k}{2h}\mathbb{E}(X') \leq f_h(X') \leq f_h(X) $$ 
Also 
$$ \mathbb{E}(X|A) = \frac{1}{q_t} \left(\sum_{j=1}^t p_jx_j \right) = \frac{2k}{h}\left(\sum_{j=1}^t p_jx_j \right) = \frac{2k}{h}\mathbb{E}(X') $$
Therefore,
$$ \mathbb{E}(X|A) \leq 4f_h(X) $$
\end{proof}

In summary, we've seen that we can bound contributions of the top $h/2k$ quantile to $f_h$, and upper bound the contribution of the tail. We've also seen that we can upper and lower bound the expectation and the conditional expectation of $X$ using $f_h$.

\subsection{A Test with Constant Factor Approximation to Optimal}
Using the preliminary results we proved in the previous section, this section puts them together to give our main result:
\begin{theorem} \label{thm: main1}
If $X_1,...,X_k$ are the top scorers for the test function $f_h$, and $Y_1,...,Y_k$ is the true optimal team with respect to the team performance scoring function $g_h$, then for constant $\lambda$, ($\lambda < 30$),
$$g_h(Y_1,...,Y_k) \leq \lambda g_h(X_1,...,X_k) $$
\end{theorem}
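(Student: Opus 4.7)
The plan is to bound $g_h(Y_1,\ldots,Y_k)$ above and $g_h(X_1,\ldots,X_k)$ below by comparable expressions built from the $f_h$-scores of the selected team $X_1,\ldots,X_k$, then combine them using the two selection inequalities that come for free: $\sum_i f_h(Y_i) \leq \sum_i f_h(X_i)$ and $\max_i f_h(Y_i) \leq f_h(X_1)$. The preliminary lemmas serve as the translation layer between $f_h$-values and distributional features of the underlying variables.

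For the upper bound on $g_h(Y)$, I would decompose each $Y_i$ at its top $h/2k$-quantile event $A_i$. The top-$h$ sum then splits into (i) top-quantile contributions, each satisfying $\mathbb{E}[Y_i \mathbf{1}[A_i]] \leq 2h f_h(Y_i)/k$ by Lemma \ref{lemma: top_vals_cond_ub}, summed over all indices, and (ii) at most $h$ tail contributions, each bounded by $\max_i f_h(Y_i)/(1 - 1/\sqrt{e})$ via Lemma \ref{lem: tail-bound}. Combining these and applying the selection inequalities yields roughly $g_h(Y) \leq (2h/k)\sum_i f_h(X_i) + h f_h(X_1)/(1 - 1/\sqrt{e})$.

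For the lower bound on $g_h(X)$, I would truncate each $X_i$ to its top-quantile part $X_i'$; by Lemma \ref{lemma: topvalbound}, $f_h(X_i') \geq (1-1/\sqrt{e})f_h(X_i)$, and $g_h(X) \geq g_h(X')$ by monotonicity of the top-$h$ sum. Since each $X_i'$ is rare (mass at most $h/2k$), the expected number of nonzero $X_i'$ is at most $h/2$; Markov's inequality then gives a constant-probability event on which at most $h-1$ of them are nonzero, and on that event the top-$h$ sum of the $X_i'$ coincides with $\sum_i X_i'$. This yields $g_h(X') \geq \frac{1}{2}\sum_i \mathbb{E}[X_i']$, which Lemma \ref{lem:rare_potbound} converts into $g_h(X) \geq c(h/k)\sum_i f_h(X_i)$, matching the first term of the upper bound up to a universal constant. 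To match the $h f_h(X_1)$ term, I would partition $\{1,\ldots,k\}$ into $h$ groups of size $k/h$, apply $g_h(X) \geq \sum_j \mathbb{E}[\max_{i \in G_j} X_i]$, and lower-bound each group-max by the same rare-variable ``at-most-one-nonzero'' analysis restricted to that group.

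The main obstacle I anticipate is this second lower bound: when almost all of the $f_h$-mass is concentrated on $X_1$, the group-max arguments only produce $g_h(X) \geq c \cdot h f_h(X_1)/k$, which is too weak to dominate the $h f_h(X_1)/(1 - 1/\sqrt{e})$ term in the upper bound. The remedy is to tighten the upper bound by keeping the tail contribution as its own $g_h$-quantity: using subadditivity of the top-$h$ sum, $g_h(Y) \leq g_h(Y') + g_h(W)$ where $W_i = Y_i \mathbf{1}[A_i^c]$ is the tail part of $Y_i$; then comparing $g_h(W)$ to $g_h$ of the corresponding $X$-tails variable-by-variable (so one compares like to like) lets the tail term be absorbed by the tail part of $g_h(X)$ rather than by the crude $h \max f_h$ bound. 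Tracking constants through all steps should land $\lambda$ slightly below $30$.
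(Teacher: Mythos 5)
Your two main bounds are essentially the paper's: the upper bound via the top-$h/2k$-quantile decomposition plus Lemma \ref{lem: tail-bound} for the at most $h$ tail contributions is Theorem \ref{thm: ub}, and the lower bound via truncation (Lemma \ref{lemma: topvalbound}), the rare-variable analysis, and the partition into $h$ groups of size $k/h$ is Theorems \ref{thm: lb1} and \ref{thm : lb}. (One local caution: the step ``Markov gives an event $E$ of probability $\geq 1/2$ on which the top-$h$ sum equals $\sum_i X_i'$, hence $g_h(X')\geq \frac12\sum_i \mathbb{E}[X_i']$'' is not valid as written, since $\mathbb{E}[1_E\sum_i X_i']\neq \mathbb{P}(E)\,\mathbb{E}[\sum_i X_i']$ and the correlation is the wrong way; you need to condition per variable, bounding $\mathbb{E}[X_i'\,1_E]\geq \mathbb{E}[X_i']\,\mathbb{P}(\text{few others nonzero})$ by independence. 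The paper's sequential ``first nonzero wins'' argument in Lemma \ref{lem: rare_pot_lb1} does exactly this.)

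The genuine gap is in how you combine the bounds. You correctly identify that the term $h\max_i f_h(Y_i)/(1-1/\sqrt{e}) \leq h f_h(X_1)/(1-1/\sqrt{e})$ cannot be dominated by any lower bound on $g_h(X)$, but your remedy --- comparing $g_h$ of the $Y$-tails to $g_h$ of the ``corresponding'' $X$-tails variable by variable --- does not work. There is no useful correspondence: the selected $X_i$ can each have \emph{all} of their mass inside the top $h/2k$ quantile (e.g.\ each $X_i$ takes a single value $M$ with probability exactly $h/2k$), so that every $X$-tail is identically zero, while the unselected $Y_j$ (say, deterministic with value just under $f_h(X_i)$) have strictly positive tails; no constant-factor ``tails to tails'' comparison exists. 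What actually closes the argument --- and what the paper does --- is to split the optimal team into its intersection with the selected team and the remainder $Y_1,\dots,Y_l$. The intersection is handled by monotonicity ($g_h$ of the intersection is at most $g_h(X_1,\dots,X_k)$, added to both sides), and every $Y_j$ in the remainder satisfies $f_h(Y_j)\leq c:=\min_i f_h(X_i)$ because it was not selected, so Theorem \ref{thm: ub} bounds the remainder (including its tail term) by $(2 + 1/(1-1/\sqrt{e}))hc$, which the lower bound $g_h(X)\geq 2hc(1-1/\sqrt{e})^2$ absorbs. The single threshold $c$ --- the $k$-th highest test score --- is the quantity your combination is missing; your selection inequalities $\sum_i f_h(Y_i)\leq\sum_i f_h(X_i)$ and $\max_i f_h(Y_i)\leq f_h(X_1)$ are too coarse to substitute for it.
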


The proof proceeds in two steps. First, we show an upper bound for $g_h$ in terms of $f_h$. In particular, if every member of the team $X_i$ has $f_h(X_i) \leq c$, we show that the team performance (according to $g_h$) is $\leq Ac$, where $A$ is a constant. After proving a similar lower bound, we can put the two together to get our desired constant factor approximation.  

\xhdr{The Upper Bound}

\begin{theorem} \label{thm: ub}
Let $X_1,...,X_k$ be random variables with $f_h(X_i) \leq c$. Then 
$$ g_h(X_1,...,X_k) \leq 2hc + \frac{hc}{\mre}$$
\end{theorem}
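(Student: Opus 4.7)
The plan is to split each random variable $X_i$ into its top-quantile part and its tail part, then bound the two contributions to $g_h$ separately using the preliminary lemmas. Let $A_i$ denote the event that $X_i$ takes a value in its top $h/2k$ quantile, so $P(A_i) = h/2k$, and write
\[ X_i = X_i \mathbf{1}_{A_i} + X_i \mathbf{1}_{A_i^c}. \]
Let $T \subseteq \{1,\ldots,k\}$ be the (random) set of $h$ indices selecting the top $h$ values among $X_1,\ldots,X_k$. Then
\[ g_h(X_1,\ldots,X_k) = \mathbb{E}\Bigl[\sum_{i\in T} X_i\Bigr] \leq \mathbb{E}\Bigl[\sum_{i\in T} X_i \mathbf{1}_{A_i}\Bigr] + \mathbb{E}\Bigl[\sum_{i\in T} X_i \mathbf{1}_{A_i^c}\Bigr]. \]

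For the second term (tail contribution), I would apply Lemma \ref{lem: tail-bound} to each $X_i$: any tail value of $X_i$ is at most $f_h(X_i)/\mre \leq c/\mre$. Since this bound holds pointwise on $A_i^c$, and $|T| = h$, we get $\sum_{i\in T} X_i \mathbf{1}_{A_i^c} \leq hc/\mre$ almost surely, hence the expectation is also bounded by $hc/\mre$.

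For the first term (top-quantile contribution), since each $X_i \mathbf{1}_{A_i} \geq 0$ and $T \subseteq \{1,\ldots,k\}$, we have the crude but sufficient bound
\[ \sum_{i\in T} X_i \mathbf{1}_{A_i} \leq \sum_{i=1}^{k} X_i \mathbf{1}_{A_i}. \]
Taking expectations and using $\mathbb{E}[X_i \mathbf{1}_{A_i}] = P(A_i) \cdot \mathbb{E}[X_i \mid A_i] = (h/2k)\,\mathbb{E}[X_i \mid A_i]$, Lemma \ref{lemma: top_vals_cond_ub} gives $\mathbb{E}[X_i \mid A_i] \leq 4 f_h(X_i) \leq 4c$, so each summand is at most $2hc/k$. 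Summing over $k$ indices yields $2hc$. Adding the two estimates produces the claimed bound $g_h(X_1,\ldots,X_k) \leq 2hc + hc/\mre$.

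The only nontrivial step is the upper bound on the top-quantile contribution, and it hinges on the fact that dropping the restriction to $T$ in favor of summing over all $k$ indices is not wasteful, because each top-quantile event has probability only $h/2k$, so in expectation only about $h/2$ of the indicators fire; this exactly cancels the factor of $k$ from summing. The tail bound, by contrast, is purely a worst-case pointwise estimate. Everything else is bookkeeping, and the two lemmas do the substantive work.
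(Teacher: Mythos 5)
Your proof is correct and follows essentially the same approach as the paper's: both decompose each $X_i$ into its top-$h/2k$-quantile part and its tail part, bound the tail contribution pointwise via Lemma \ref{lem: tail-bound} and the top-quantile contribution in expectation via Lemma \ref{lemma: top_vals_cond_ub}, and arrive at the identical constants. The only difference is bookkeeping: where the paper partitions the sample space into the events $B_S$ and sums over a binomial distribution, you apply linearity of expectation directly, which is a clean simplification (and, incidentally, does not require independence of the $X_i$ for the top-quantile term).
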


\begin{proof}
Assume the underlying sample space is $[0,1]^k$. Let $S \subset [k]$, and 
$$B_S = \{ \omega \in [0,1]^k : \omega_i > 1 - \frac{h}{2k} \iff i \in S \} $$
i.e. the event that $X_i$ takes values in its top $h/2k$ quantile iff $i \in S$. For a sample point $\omega \in B_S$, note that
$$(X_{X_1,...,X_k}^{(1)},...,X_{X_1,...,X_k}^{(h)})(\omega) \leq \sum_{i \in S} X_i(\omega) + \frac{hc}{\mre} $$
Indeed, if the top $h$ values are $X_{n_1},...,X_{n_h}$, with the first $m$, $n_1,...,n_m$ in $S$ then
$$ \sum_{ i = 1}^m X_{n_i}(\omega) \leq \sum_{i \in S} X_i(\omega)$$
The remaining random variables, $X_{n_{m+1}},...,X_{n_h}$ take tail values (as in Definition \ref{def: top-values}), so by Lemma \ref{lem: tail-bound},
$$ \sum_{i = m+1}^h X_{n_i}(\omega) < (h-m) \frac{c}{\mre} \leq \frac{hc}{\mre}$$
giving the inequality. Summing up over all $\omega \in B_S$, we get
$$g_h\left((X_1,...,X_k) 1_{B_S} \right) \leq \mathbb{E}\left( 1_{B_S} \sum_{i \in S} X_i \right) + \mathbb{P}(B_S) \frac{hc}{\mre}$$
But letting $A_i$ be the event that $\omega_i > 1 - \frac{h}{2k}$, and using independence of the $X_i$ and linearity of expectation
$$ \mathbb{E}\left( 1_{B_S} \sum_{i \in S} X_i \right) = \mathbb{P}(B_S)\sum_{i \in S}\mathbb{E}(X_i|A_i)$$
Using the bound in Lemma \ref{lemma: top_vals_cond_ub}, this becomes
$$ \mathbb{E}\left( 1_{B_S} \sum_{i \in S} X_i \right) \leq \mathbb{P}(B_S)|S|4c $$
Finally, as $\mathbb{P}(B_S) = \prod_{i \in S} \mathbb{P}(A_i) \prod_{i \notin S} (1 - \mathbb{P}(A_i))$, 
$$\mathbb{P}(B_S) = \left(\frac{h}{2k} \right)^{|S|} \left(1 - \frac{h}{2k} \right)^{k - |S|}$$
i.e. the number of $X_i$ taking their top values follows a Binomial distribution, parameters $(k, \frac{h}{2k})$. So, summing up over $B_S$ for all $S \subset [k]$, we get
$$ g_h(X_1,...,X_k) \leq \sum_{i=0}^k \binom{k}{i} \left(\frac{h}{2k}\right)^i \left(1 - \frac{h}{2k}\right)^{k-i} i\cdot4c + \frac{hc}{\mre}$$
Noting that the first term on the right hand side is just the mean ($h/2$) of the Binomial distribution scaled by $4c$ gives the result.
\end{proof}

\xhdr{The Lower Bound}
We now move on to a lower bound. We first give a lower bound for the case $h = 1$, when $g_h = \mathbb{E}(\max(\cdot))$, and show how to extend this for general $h$. To prove the $h= 1$ case, we will use our transformation in Lemma \ref{lemma: topvalbound} to zero all values lower than the top $1/2k$ quantile, and prove a lower bound on random variables with total positive probability mass $\leq 1/2k$. We thus first state and derive this.

\begin{lemma} \label{lem: rare_pot_lb1}
Let $X_1,...,X_k$ all have total positive probability mass $\leq \frac{1}{2k}$, with $f_1(X_i) \geq c$ for all $i$. Then
$$ \emax \geq 2c \left( \mre \right) $$
\end{lemma}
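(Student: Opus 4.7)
The plan is to prove the lemma via the tail-integral representation of expectation combined with a concavity estimate that exploits the assumption $q_i := \mathbb{P}(X_i > 0) \leq 1/(2k)$. The approach converts $\mathbb{E}(\max)$ pointwise into a multiple of $\sum_i \mathbb{E}(X_i)$, after which Lemma \ref{lem:rare_potbound} finishes the job.

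First I would write
$$\mathbb{E}(\max(X_1,\ldots,X_k)) = \int_0^\infty \mathbb{P}(\max(X_1,\ldots,X_k) > t)\,dt,$$
and then use independence to get $\mathbb{P}(\max > t) = 1 - \prod_i (1 - r_i(t))$, where $r_i(t) := \mathbb{P}(X_i > t)$. The key observation is that for every $t > 0$, $r_i(t) \leq q_i \leq 1/(2k)$, hence $\sum_i r_i(t) \leq 1/2$ uniformly in $t$. Next I would combine $1 - x \leq e^{-x}$ (yielding $\prod_i (1 - r_i(t)) \leq \exp(-\sum_i r_i(t))$) with the concavity of $s \mapsto 1 - e^{-s}$: since the latter vanishes at $0$, the chord from $(0,0)$ to $(1/2,\, 1 - 1/\sqrt{e})$ lies below the graph, so $1 - e^{-s} \geq 2(1 - 1/\sqrt{e})\, s$ for $s \in [0, 1/2]$. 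Applied pointwise in $t$, this yields $\mathbb{P}(\max > t) \geq 2(1 - 1/\sqrt{e}) \sum_i r_i(t)$.

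Integrating over $t$ and swapping sum and integral gives $\mathbb{E}(\max) \geq 2(1 - 1/\sqrt{e}) \sum_i \mathbb{E}(X_i)$, since $\int_0^\infty r_i(t)\,dt = \mathbb{E}(X_i)$. Applying Lemma \ref{lem:rare_potbound} with $h = 1$, each $\mathbb{E}(X_i) \geq f_1(X_i)/k \geq c/k$, so $\sum_i \mathbb{E}(X_i) \geq c$ and the claimed bound $\mathbb{E}(\max) \geq 2c(1 - 1/\sqrt{e})$ follows.

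I expect the main obstacle to be matching the target constant exactly. A crude union bound or keeping only the first two inclusion-exclusion terms yields constants like $1/\sqrt{e}$ or $3/4$, both strictly weaker than $2(1 - 1/\sqrt{e}) \approx 0.786$. The crucial point is that the assumption $\sum_i q_i \leq 1/2$ is precisely the range on which the chord estimate for $1 - e^{-s}$ from the origin to $s = 1/2$ is available, and that the factor of $2$ it produces dovetails with the factor of $2$ in the lower bound of Lemma \ref{lem:rare_potbound}, giving the exact constant with no slack.
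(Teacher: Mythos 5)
Your proof is correct, and it reaches the exact constant $2c\left(1 - \frac{1}{\sqrt{e}}\right)$ by a genuinely different route from the paper. The paper lower-bounds the maximum pointwise by a sequential ``first nonzero candidate wins'' rule, whose expected output is $\sum_{i=1}^{k}\bigl(\prod_{j<i}(1-\mathbb{P}(A_j))\bigr)\mathbb{E}(X_i) \geq \sum_{i=1}^{k}\left(1-\frac{1}{2k}\right)^{i-1}\mathbb{E}(X_i)$; it then plugs in $\mathbb{E}(X_i)\geq c/k$ and sums the geometric series to get $2k\bigl(1-(1-\tfrac{1}{2k})^k\bigr)\cdot\frac{c}{k}\geq 2c\left(1-\frac{1}{\sqrt{e}}\right)$. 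You instead work analytically with the tail integral, using $1-\prod_i(1-r_i(t))\geq 1-e^{-\sum_i r_i(t)}$ and the chord bound $1-e^{-s}\geq 2\left(1-\frac{1}{\sqrt{e}}\right)s$ on $[0,\tfrac12]$, which is where the hypothesis $q_i\leq\frac{1}{2k}$ enters. All steps check out: the chord estimate is valid by concavity of $s\mapsto 1-e^{-s}$ with value $0$ at $0$, independence justifies the product formula for $\mathbb{P}(\max>t)$, and Tonelli justifies the interchange. A minor structural difference worth noting: your intermediate inequality $\mathbb{E}(\max)\geq 2\left(1-\frac{1}{\sqrt{e}}\right)\sum_i\mathbb{E}(X_i)$ is symmetric in the $X_i$, whereas the paper's weights $(1-\tfrac{1}{2k})^{i-1}$ depend on an arbitrary ordering and only average out to the same constant once the uniform bound $\mathbb{E}(X_i)\geq c/k$ is applied; your symmetric form is a cleaner standalone statement (``for variables of total mass at most $\frac{1}{2k}$ each, the expected maximum captures a constant fraction of the expected sum''), while the paper's argument is more elementary, avoiding the integral representation entirely.
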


\begin{proof}
For any $X_i$, let $A_i$ be the event that $X_i$ is nonzero. We lower bound the expected maximum as follows: given $X_1,...,X_k$ in that order, we output the value of the first nonzero random variable we come across (starting from $X_1$ and finishing at $X_k$.)

This output value is pointwise less than or equal to the true maximum, so its expected value is a lower bound on the expected maximum. But its expected value is just
$$ \mathbb{P}(A_1) \mathbb{E}(X_1|A_1) + (1 - \mathbb{P}(A_1))\mathbb{P}(A_2) \mathbb{E}(X_2 |A_2) + ... + \left( \prod_{i=1}^{k-1} (1 - \mathbb{P}(X_i) \right) \mathbb{E}(X_k) $$
Noting that $\mathbb{P}(A_i) \mathbb{E}(X_i|A_i) = \mathbb{E}(X_i)$ and that $(1 - \mathbb{P}(A_i)) \geq (1 - \frac{1}{2k})$, we get
$$ \emax \geq \mathbb{E}(X_1) + \left(1 - \frac{1}{2k} \right) \mathbb{E}(X_2) + ... + \left(1 - \frac{1}{2k} \right)^{k-1} \mathbb{E}(X_k)$$ 
Using the lower bound of $\mathbb{E}(X_i) \geq \frac{f_1(X_i)}{k}$ from Lemma \ref{lem:rare_potbound}, summing up the geometric series, and noting $(1 - \frac{1}{2k})^k \geq ( \mre )$, we have
$$\emax \geq 2c \left( \mre \right)$$
as desired.
\end{proof}

We now prove our lower bound for $h = 1$. 

\begin{theorem} \label{thm: lb1}
Let $X_1,...,X_k$ be random variables with $f_1(X_i) \geq c$ for all $i$. Then
$$ \emax \geq 2c \left( \mre \right)^2$$
\end{theorem}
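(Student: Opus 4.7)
The plan is to combine the truncation operation from Lemma \ref{lemma: topvalbound} with the ``rare but high-potential'' lower bound from Lemma \ref{lem: rare_pot_lb1}. The idea is that Lemma \ref{lem: rare_pot_lb1} already gives us exactly the statement we want, but only under the restrictive hypothesis that each $X_i$ places all its probability mass in its top $1/2k$ quantile. Lemma \ref{lemma: topvalbound} tells us precisely that truncating away everything below that quantile costs at most a factor of $1-1/\sqrt{e}$ in $f_1$. So the two lemmas slot together.

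Concretely, first I would define, for each $i$, the truncated variable $X_i'$ as in Lemma \ref{lemma: topvalbound} (specialized to $h=1$): $X_i'(\omega)=X_i(\omega)$ when $\omega>1-1/(2k)$ and $X_i'(\omega)=0$ otherwise. By Lemma \ref{lemma: topvalbound}, $f_1(X_i')\geq f_1(X_i)(1-1/\sqrt{e})\geq c(1-1/\sqrt{e})$. By construction, each $X_i'$ has total positive probability mass at most $1/(2k)$, so the hypotheses of Lemma \ref{lem: rare_pot_lb1} are satisfied with the constant $c$ replaced by $c(1-1/\sqrt{e})$.

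Applying Lemma \ref{lem: rare_pot_lb1} to the $X_i'$ then yields
\[
\mathbb{E}(\max(X_1',\dots,X_k'))\;\geq\;2c\Bigl(1-\tfrac{1}{\sqrt{e}}\Bigr)^2.
\]
Finally, since $0\leq X_i'(\omega)\leq X_i(\omega)$ pointwise on the sample space, we have $\max_i X_i'\leq \max_i X_i$ pointwise as well, so taking expectations preserves the inequality and gives $\mathbb{E}(\max(X_1,\dots,X_k))\geq 2c(1-1/\sqrt{e})^2$, as required.

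There is no substantive obstacle here beyond lining up the lemmas; the only thing to be careful about is that the hypothesis of Lemma \ref{lem: rare_pot_lb1} is about probability mass, not about the support of $X_i'$ lying in a specific quantile, and that the truncation really does push the surviving mass into the top $1/(2k)$ quantile of $X_i$ (which it does by definition). The factor $(1-1/\sqrt{e})^2$ in the final bound is exactly the product of the loss $(1-1/\sqrt{e})$ from truncation with the factor $(1-1/\sqrt{e})$ already appearing in Lemma \ref{lem: rare_pot_lb1}.
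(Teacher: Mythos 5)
Your proof is correct and follows essentially the same route as the paper's: truncate each $X_i$ to its top $1/(2k)$ quantile via Lemma \ref{lemma: topvalbound}, losing a factor of $1-1/\sqrt{e}$ in $f_1$, then apply Lemma \ref{lem: rare_pot_lb1} to the truncated variables and use pointwise domination to transfer the bound back. Your writeup is just a more explicit version of the paper's argument.
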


\begin{proof}
For any $X_i$ with total positive probability mass $ > \frac{1}{2k}$, we apply the transformation in Lemma \ref{lemma: topvalbound} to get $X'_i$, which is a lower bound on $X_i$. So certainly
$$\emax \geq \mathbb{E}(\max(X'_1,...,X'_k))$$
and by Lemma \ref{lemma: topvalbound}, 
$$f(X'_i) \geq c \left(\mre \right)$$ 
so using Lemma \ref{lem: rare_pot_lb1} , the statement of the theorem follows. 
\end{proof}

We now apply this to prove the main lower bound theorem

\begin{theorem} \label{thm : lb}
Let $X_1,...,X_k$ be random variables with $f_h(X_i) \geq c$ for all $i$. Then 
$$ g_h(X_1,...,X_k) \geq 2hc \left( \mre \right)^2 $$
\end{theorem}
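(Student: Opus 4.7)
The plan is to reduce the general-$h$ case to the $h=1$ result of Theorem~\ref{thm: lb1} by a disjoint-group argument. I would partition the team $\{X_1,\dots,X_k\}$ into $h$ groups $G_1,\dots,G_h$ of size $k/h$ each, and let $M_j = \max_{X \in G_j} X$ denote the pointwise maximum over the $j$-th group. Because the groups are disjoint, the sum $\sum_{j=1}^h M_j$ equals a sum of $h$ distinct elements of $\{X_1,\dots,X_k\}$ on every sample, and is therefore pointwise at most the sum of the top $h$ elements of $\{X_1,\dots,X_k\}$. Taking expectations gives the key lower bound
\[ g_h(X_1,\dots,X_k) \geq \sum_{j=1}^{h} \mathbb{E}(M_j), \]
so the remaining task is to show $\mathbb{E}(M_j) \geq 2c(\mre)^2$ for each $j$.

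For the individual group bounds I would invoke Theorem~\ref{thm: lb1}, exploiting a natural size/copies symmetry. Within a group of $k/h$ variables, the test $f_h(X) = \mathbb{E}(\max(X^{(1)},\dots,X^{(k/h)}))$ is exactly the $h=1$ canonical test at team size $k/h$, since both are the expected maximum of $k/h$ i.i.d.\ copies of $X$. Thus under the renaming $k \mapsto k/h$, the entire preliminary chain (in particular Lemmas~\ref{lemma: topvalbound} and~\ref{lem: rare_pot_lb1}) and Theorem~\ref{thm: lb1} itself carry over without change, since the relevant constants --- the top $h/(2k) = 1/(2(k/h))$ quantile threshold and the bound $(1-h/(2k))^{k/h} \geq \mre$ --- are already in the form appropriate for team size $k/h$. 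Applied to each $G_j$ whose elements all satisfy $f_h(X_i) \geq c$, this yields $\mathbb{E}(M_j) \geq 2c(\mre)^2$.

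Summing over the $h$ groups gives the claimed inequality
\[ g_h(X_1,\dots,X_k) \geq h \cdot 2c(\mre)^2 = 2hc(\mre)^2. \]
The principal obstacle is conceptual rather than computational: one must recognize that the expected top-$h$ sum dominates a disjoint partition of group maxima on a sample-by-sample basis, and verify that the $h=1$ machinery is invariant under rescaling in which the team size and the number of test copies are tied together by the same factor. Both of these are straightforward once the partition is set up.
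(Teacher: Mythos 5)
Your proposal is correct and follows essentially the same route as the paper: partition the $k$ variables into $h$ disjoint groups of size $k/h$, observe that the sum of the group maxima is pointwise dominated by the top-$h$ sum, and apply Theorem~\ref{thm: lb1} to each group after noting that $f_h$ at team size $k$ coincides with $f_1$ at team size $k/h$. Your explicit verification that the constants in the $h=1$ machinery are invariant under this rescaling is a detail the paper leaves implicit, but the argument is the same.
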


\begin{proof}
Note that certainly
$$g_h(X_1,...,X_k) \geq \mathbb{E}(\max(X_1,...,X_{k/h})) + ... + \mathbb{E}(\max(X_{k - h + 1},...,X_k))$$
But each term on the right hand side is bounded below by $2c \left( \mre \right)^2$ by using Theorem \ref{thm: lb1}. So summing together, we have
$$g_h(X_1,...,X_k) \geq 2hc\left( \mre \right)^2$$ as desired.
\end{proof}

\xhdr{Finishing the proof}
With established lower and upper bounds, Theorem \ref{thm: main1} follows easily.

\begin{proof}
{\em (Theorem \ref{thm: main1})}
First note that if $l < h$, we can define $g_h(X_1,...,X_l)$ to be the sum of the expectations of all the $X_i$ as this is the same as adding $h - l$ random variables, each deterministically $0$.

Without loss of generality, 
let  $\{Y_1,...,Y_k \} = \{Y_1,...,Y_l, X_{l+1},...,X_k \}$ i.e. $X_{l+1},...,X_k$ is the intersection of the team formed of best test scorers and the optimal team. Now, if $c = \min_i f_h(X_i)$, then for $j \leq l$, as any $Y_j$ is not in the top $k$ scorers, $f_h(Y_j) \leq c$. 

Note that
$$ 2 g_h(X_1,...,X_k) \geq g_h(X_1,...,X_k) + g_h(X_{l+1},...,X_k) $$
Using the lower bound from Theorem \ref{thm : lb}, we get
$$ 2 g_h(X_1,...,X_k) \geq  2hc \left( \mre \right)^2 + g_h(X_{l+1},...,X_k)$$
On the other hand,
$$g_h(Y_1,..., X_{l+1},...,X_k)  \leq g_h (Y_1,...,Y_l) + g_h(X_{l+1},...,X_k)$$
Using the upper bound from Theorem \ref{thm: ub} then gives
$$g_h\left(Y_1,..., X_{l+1},...,X_k \right) \leq 2hc + \frac{hc}{\mre} + g_h (X_{l+1},...,X_k)$$
So we get that
$$g_h(Y_1,...,Y_k) \leq \lambda g_h(X_1,...,X_k)$$
where 
$$\lambda = \frac{ 2 \left(\mre \right) + 1}{ \left( \mre \right)^3} $$
\end{proof}

\subsection{A Different Test}
In the previous section we proved the main result of the paper, that there exists a test function, $f_h$, evaluating `potential', that can be used to select a team whose performance, according to a team performance function $g_h$, is only a constant factor from the optimal, independent of team size. 

A natural follow up question is whether $f_h$ is the only such test. From the proof, we can see that this is not the case.  If $E = \{ \omega : \omega > 1 - \frac{h}{k} \}$ for $\omega \in [0,1]$, the underlying sample space, then choosing $X$ according to the value of
$$ \mathbb{E}(X | E)$$
also provides a constant-factor approximation to the optimal set. 

\begin{theorem} \label{thm: other_test}
If $X_1,...,X_k$ are random variables with the $k$ highest values of $\mathbb{E}(X_i|E_i)$, where $E_i$ is the event that $X_i$ takes its top $h/k$ quantile of values, and $Y_1,...,Y_k$ is the optimal set size $k$, then for a constant $\mu$ independent of $k$,
$$ g_h(Y_1,...,Y_k) \leq \mu g_h(X_1,...,X_k)$$.
\end{theorem}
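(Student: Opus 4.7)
The plan is to reduce the theorem to the existing machinery by proving that the alternative test $\tilde{f}(X) := \mathbb{E}(X|E)$ is within constant factors of $f_h(X)$, after which the two-sided argument from Theorem \ref{thm: main1} applies almost verbatim. Concretely, I aim to establish constants $C_1, C_2 > 0$ with $C_2\, f_h(X) \leq \tilde{f}(X) \leq C_1\, f_h(X)$ for every candidate $X$, and then run the same lower-bound/upper-bound comparison that invokes Theorems \ref{thm : lb} and \ref{thm: ub}.

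To obtain the equivalence, I split the top $h/k$ quantile event $E$ at its midpoint into the top half $A$ (which coincides with the top $h/(2k)$ quantile used throughout Section 2) and the bottom half $E \setminus A$. Since $P(A)/P(E) = 1/2$, one has $\tilde{f}(X) = \tfrac{1}{2}\mathbb{E}(X|A) + \tfrac{1}{2}\mathbb{E}(X|E\setminus A)$. For the upper bound, Lemma \ref{lemma: top_vals_cond_ub} gives $\mathbb{E}(X|A) \leq 4 f_h(X)$; and because every value attained on $E\setminus A$ is indexed at or below the boundary value $x_t$ of the top $h/(2k)$ quantile, Lemma \ref{lem: tail-bound} yields $\mathbb{E}(X|E\setminus A) \leq f_h(X)/(1-1/\sqrt{e})$, which together produce a constant $C_1$. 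For the lower bound, drop the $E\setminus A$ term and chain the identity $\mathbb{E}(X|A) = (2k/h)\mathbb{E}(X')$ from the proof of Lemma \ref{lemma: top_vals_cond_ub} (with $X'$ the restriction of $X$ to $A$), the inequality $\mathbb{E}(X') \geq (h/k) f_h(X')$ from Lemma \ref{lem:rare_potbound}, and the comparison $f_h(X') \geq (1 - 1/\sqrt{e}) f_h(X)$ from Lemma \ref{lemma: topvalbound}, producing $\tilde{f}(X) \geq (1 - 1/\sqrt{e}) f_h(X)$.

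With the equivalence in hand, I would follow the proof of Theorem \ref{thm: main1} line by line. Setting $\tilde{c} = \min_i \tilde{f}(X_i)$ over the selected team, each $X_i$ satisfies $f_h(X_i) \geq \tilde{c}/C_1$, so Theorem \ref{thm : lb} gives $g_h(X_1,\ldots,X_k) \geq 2h(\tilde{c}/C_1)(1-1/\sqrt{e})^2$. Writing the optimal team as $\{Y_1,\ldots,Y_l, X_{l+1},\ldots,X_k\}$, every $Y_j$ outside the selection satisfies $\tilde{f}(Y_j) \leq \tilde{c}$, hence $f_h(Y_j) \leq \tilde{c}/(1-1/\sqrt{e})$, and Theorem \ref{thm: ub} bounds $g_h(Y_1,\ldots,Y_l)$ by a constant multiple of $h\tilde{c}$. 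The inequality $g_h(Y_1,\ldots,Y_k) \leq g_h(Y_1,\ldots,Y_l) + g_h(X_{l+1},\ldots,X_k) \leq \mu\, g_h(X_1,\ldots,X_k)$ then closes the argument, with $\mu$ a concrete constant depending only on $C_1$, $C_2$, and $1 - 1/\sqrt{e}$.

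The main obstacle I anticipate is the bookkeeping in the equivalence step, particularly the upper bound on $\tilde{f}$: the split of $E$ at the $h/(2k)$ mark generally requires splitting the probability mass of the single boundary value (the same convention used just after Definition \ref{def: top-values} and throughout Lemma \ref{lemma: top_vals_cond_ub}), so the clean decomposition of $\mathbb{E}(X|E)$ into equal halves only holds once that split is performed. Once the splitting convention is in place, the remainder is a routine combination of the preliminary lemmas of Section 2, and the reduction to Theorem \ref{thm: main1} is mechanical.
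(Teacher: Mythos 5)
Your argument is correct, but it takes a genuinely different route from the paper's. The paper proves the analogues of Theorems \ref{thm: ub} and \ref{thm : lb} from scratch for the test $\mathbb{E}(X|E)$: for the upper bound it observes that $\mathbb{E}(X|E)\leq c$ forces every value outside the top $h/k$ quantile to be at most $c$, and then sums over the events $C_T$ indexed by which variables land in their top $h/k$ quantile; for the lower bound it uses a binomial median argument to get $g_h \geq hc/4$ (treating $h=1$ separately), and only then reuses the final combinatorial step of Theorem \ref{thm: main1}, arriving at $\mu=16$. You instead prove the pointwise two-sided equivalence $\left(1-\tfrac{1}{\sqrt{e}}\right) f_h(X) \leq \mathbb{E}(X|E) \leq \left(2 + \tfrac{1}{2(1-1/\sqrt{e})}\right) f_h(X)$ by splitting $E$ at the $h/(2k)$ mark and chaining Lemmas \ref{lemma: top_vals_cond_ub}, \ref{lem: tail-bound}, \ref{lem:rare_potbound} and \ref{lemma: topvalbound}, and then invoke Theorems \ref{thm: ub} and \ref{thm : lb} as black boxes with $c$ rescaled by $C_1$ and $C_2$ respectively. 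Both are valid: your reduction yields a noticeably larger constant $\mu$ (roughly $\frac{C_1}{C_2}\cdot\frac{2+1/(1-1/\sqrt{e})}{(1-1/\sqrt{e})^2}$, on the order of a few hundred rather than $16$), but it is more modular and it makes precise the paper's informal remark that the two tests are closely related --- the constant-factor equivalence of $\mathbb{E}(X|E)$ and $f_h$ as test scores is itself a clean structural fact that the paper's direct proof does not exhibit. Your handling of the boundary-value split is the right convention and matches the one used throughout Section 2.
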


The two proofs are similar, which is expected, as the analysis of the function $f_h(\cdot)$ makes use of
quantities derived from $\mathbb{E}(X | E)$.
The function $f_h(\cdot)$ seems the more natural of the two,
however: it is arguably more direct to think about testing
an individual through repeated independent evaluations than to 
try quantifying what their top $h/k$ values are likely to be. The full proof is included in the Appendix.

\subsection{A Best Approximation?}
In this section we've seen that there exists a natural individual test, the potential test, that can get to within a constant factor ($\approx 30$) of optimal. We then outlined a different test (arguably slightly less natural to implement) which also gets to within a constant factor of the optimal ($\approx 16$). 

Seeing these constants, we might ask whether we can say something on whether there is some constant factor $C > 1$ which \textit{no} test can achieve. We prove that such a $C$ does indeed exist: 
\begin{theorem}
No test function $f$ can guarantee a constant factor approximation to the optimal closer than $9/8 = 1.125$ when evaluating team performance with the expected maximum.
\end{theorem}

\begin{proof}
Our proof is with a bad example. Assume we have three weighted Bernoulli random variables $X_1, X_2, X_3$ from which we wish to pick a team of size 2. A weighted Bernoulli random variable is one that takes exactly one nonzero value $v$ with some probability $p$, and can thus be characterized by the vector $(p, v)$. 

In that format, let our three Bernoulli random variables be $X_1 = (1/2, 2), X_2 = (1,1), X_3 = (1/2, 4/3)$. Note that $X_1$ is monotonically better than $X_3$, so any sensible test function $f$ should definitely pick $X_1$ and one of $X_2, X_3$. Indeed, if the team were to comprise of $(X_2, X_3)$, this would result in an expected maximum of $7/6$, a factor of $9/7$ away from the optimal team's expected maximum of $3/2$.

Breaking ties adversarially (as we can always perturb an example slightly in a tie), if $f(X_3) > f(X_2)$, then our team becomes $(X_1, X_3)$, but the expected maximum of this team is $4/3$, whereas the expected maximum of the team $(X_1, X_2)$ is $3/2$, and so $f$ is $9/8$ from optimal.

If on the other hand $f(X_2) > f(X_3)$, then consider a new triple of random variables $Y_1 = (1, 1), Y_2 = (1,1), Y_3 = (1/2, 4/3)$. As $Y_1, Y_2 = X_2$ and $Y_3 = X_3$, $f$ will pick the team $(Y_1, Y_2)$, which has an expected maximum of $1$ compared to picking a team of $(Y_1, Y_3)$ where the expected maximum is $7/6$, meaning $f$ is $7/6$ away from optimal. 

So the best \textit{any} test statistic can manage in this setting is a constant factor approximation of $9/8 = 1.125$. 
\end{proof}

\section{Submodularity and Negative Examples}
In this section, we recap properties of submodularity, prove the pointwise submodularity of $g_h$ and study the failure of the canonical test. We then more broadly look at submodular functions in general. We show that among submodular functions, the existence of an individual test function $f_h$ which can be used for a proof of constant factor optimality is an uncommon feature, relying on the unique properties of the expected maximum. 

\subsection{Submodularity, Pointwise Submodularity and the Canonical Test}

Earlier, we claimed that 
$\mathbb{E}(\max(\cdot))$ is {\em submodular}. 
In fact, a stronger statement is true. 
To state it, we recall our notation in which,
for a set $T$ of random variables, 
$X^{(j)}_T$ denotes the $j^{\rm th}$ largest in the set.

\begin{theorem} \label{thm: ptwise_submodular}
Let $\mathcal{U}$ be a large finite ground set of nonnegative random variables, with $\Omega$ being the underlying sample space. In a slight abuse of notation, for $\omega \in \Omega$, and $h \geq 0$, let
$$ \omega_h : \mathcal{P}(U) \rightarrow \mathbb{R}$$
be defined by
$$ \omega_h(T) = (X^{(1)}_T + ... + X^{(h)}_T)(\omega)$$
i.e. the sum of the top $h$ values of the random variables in $T$ evaluated at the sample point $\omega$. Then $\omega_h(\cdot)$ is submodular.
\end{theorem}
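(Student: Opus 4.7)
The plan is to verify submodularity via the diminishing marginal returns characterization: show that for every $S \subseteq T \subseteq \mathcal{U}$ and every $Y \notin T$,
$$\omega_h(S \cup \{Y\}) - \omega_h(S) \;\geq\; \omega_h(T \cup \{Y\}) - \omega_h(T).$$
Since $\omega$ is fixed throughout, each random variable in $\mathcal{U}$ evaluates to a deterministic nonnegative real at $\omega$, so the problem reduces to a purely combinatorial statement about order statistics of a multiset of real numbers.

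First I would derive a closed form for the marginal gain. Given a set $T$ and its values at $\omega$ sorted in decreasing order, $\omega_h(T)$ is the sum of the top $h$ values (extending by zeros if $|T| < h$, which is consistent since the variables are nonnegative). A direct case analysis on whether $Y(\omega)$ exceeds the current $h$-th order statistic $X_T^{(h)}(\omega)$ shows that inserting $Y$ either bumps out the previous $h$-th place value (replacing it by $Y(\omega)$) or leaves the top-$h$ sum unchanged. Hence
$$\omega_h(T \cup \{Y\}) - \omega_h(T) \;=\; \max\bigl(Y(\omega) - X_T^{(h)}(\omega),\ 0\bigr),$$
with the convention $X_T^{(h)}(\omega) = 0$ when $|T| < h$.

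With this formula in hand, the submodular inequality reduces to the claim $X_S^{(h)}(\omega) \leq X_T^{(h)}(\omega)$ whenever $S \subseteq T$. This is the monotonicity of order statistics: removing elements from $T$ can only decrease (or fix) the $h$-th largest value, since the set of candidates contributing to that statistic only shrinks. Because $t \mapsto \max(Y(\omega) - t, 0)$ is nonincreasing in $t$, the marginal gain at $S$ is at least the marginal gain at $T$, which is exactly the required inequality.

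The work here is almost all in setting up the right framing; the only mild obstacle is the boundary case $|S| < h$ or $|T| < h$, which I would handle uniformly by the convention that $X_T^{(j)}(\omega) = 0$ for $j > |T|$, ensuring both the marginal-gain formula and the monotonicity of order statistics carry through without modification.
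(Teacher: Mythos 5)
Your proof is correct, but it takes a cleaner route than the paper's. The paper proves the one-element-removal form of submodularity, $\omega_h(A \cup \{X, X_n\}) + \omega_h(A) \leq \omega_h(A \cup \{X\}) + \omega_h(A \cup \{X_n\})$, by fixing the ordering of all values at $\omega$ and running an explicit case analysis on whether each of the two added elements lands in the top $h$ (with sub-cases for $|A| \geq h$ versus $|A| < h$), then chains these inequalities to handle arbitrary nested sets. You instead establish the general diminishing-returns inequality for $S \subseteq T$ in one step by isolating the closed form
$$\omega_h(T \cup \{Y\}) - \omega_h(T) = \max\bigl(Y(\omega) - X^{(h)}_T(\omega),\ 0\bigr),$$
which reduces submodularity to the monotonicity of the $h$-th order statistic under set inclusion composed with the nonincreasing map $t \mapsto \max(Y(\omega)-t, 0)$. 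Both arguments rely on the same zero-padding convention for sets of size less than $h$ (the paper states it as adding deterministically zero random variables, and nonnegativity makes this consistent). What your version buys is a modular structure with no case analysis on positions in the sorted order, and it makes transparent exactly where submodularity comes from; what the paper's version buys is that the four-term rearranged inequality is verified directly, which is the form some readers expect. Your marginal-gain formula and the monotonicity claim are both correct as stated, including the boundary cases, so there is no gap.
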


In summary, we prove that if $A = S \setminus \{y\}$, with $S \subset \mathcal{U}$, then for $x \notin S$, the submodular property
$$ \omega_h( S \cup \{x\}) - \omega_h(S) \leq \omega_h(A \cup \{x\}) - \omega_h(A)$$ 
holds.   We show this by fixing an order of elements in $S$ under $\omega$ and considering what each side of the inequality looks like. Chaining a set of inequalities of this form by removing one element each time gives the result for arbitrary subsets of $S$.

(Note that if $|A| < h$, only the first $|A|$ terms are possibly nonzero - we can increase $|A|$ by adding a number of deterministically zero random variables.)

\begin{proof}
{\em (Theorem \ref{thm: ptwise_submodular})}

Assume $S = \{X_1,...,X_n\}$, and  $A = \{X_1,...,X_{n-1}\}$. Rearranging, the submodularity inequality becomes
$$ \omega(A \cup \{X, X_n\}) + \omega(A) \leq \omega(A \cup \{X\}) + \omega(A \cup \{X_n\}) $$
First note that $X, X_n$ are interchangeable in the above inequality. We examine two cases.
\begin{itemize}
\item[(1)] At least one of $X, X_n$, wlog $X$ (by symmetry) is not in the top $h$ values in $\omega$. This has two easy subcases. If $|A| \geq h$, then
$$ \omega_h(A \cup \{X , X_n\}) + \omega_h(A) = \omega_h(A \cup \{X_n\}) + \omega_h(A)$$
and
$$ \omega_h(A \cup \{X\}) + \omega_h(A \cup \{X_n\}) = \omega_h(A) + \omega_h(A \cup \{X_n\})$$
so equality holds. In the other case, we have $|A| < h$, so we get
$$\omega_h(A \cup \{X\}) + \omega_h(A \cup \{X_n\}) = \omega_h(A) + X(\omega) + \omega_h(A) + X(\omega) $$ 
The left hand side of the target inequality becomes
$$\omega_h(A \cup \{X , X_n\}) + \omega_h(A) \leq (A + X + X_n)(\omega) + A(\omega) $$ 
with strict inequality if $|A| = h - 1$, as $X$ would be omitted in this case. So again, the desired inequality holds.

\item[(2)] Now, we may assume that $X_n, X$ are both in the top $h$. Assume $$X_n (\omega) = X^{(i)}_{A \cup \{X, X_n\}}$$ and 
$$X(\omega) = X^{(j)}_{A \cup \{ X , X_n \}}$$ 
and wlog $i > j$. In $A \cup \{ X , X_n\}$, let the top $h+2$ elements (with appropriately many zero elements) be ordered as below:
$$ X_{n_1}(\omega) \geq X_{n_2}(\omega) \geq ...X_{n_{i-1}}(\omega) \geq X_n(\omega) \geq X_{n_{i+1}}(\omega) \geq ... X_{n_{j-1}}(\omega) \geq X(\omega) \geq X_{n_{j+1}}(\omega) \geq ...  X_{n_{h+2}}(\omega)$$
Then we get 
$$ \omega_h(A \cup \{X , X_n\}) + \omega_h(A) = \left( 2 \left( \sum_{\substack{ l = 1 \\ l \neq i,j}}^{h-2} X_{n_l} \right) + X + X_n + X_{n_{h+1}} + X_{n_{h+2}} \right)(\omega) $$
and 
$$ \omega_h(A \cup \{X\}) + \omega_h(A \cup \{X_n\}) = \left( 2 \left( \sum_{\substack{ l = 1 \\ l \neq i,j}}^{h-2} X_{n_l} \right) + X + X_n + 2X_{n_{h+1}} \right)(\omega)$$
Noting that $X_{n_{h+1}} \geq X_{n_{h+2}}$ gives the result.
\end{itemize}
\end{proof}

A useful corollary is:

\begin{corollary}
For $h \geq 1$, $g_h( \cdot)$ is submodular. 
\end{corollary}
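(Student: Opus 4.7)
The plan is to derive the corollary directly from Theorem~\ref{thm: ptwise_submodular} by observing that $g_h(\cdot)$ is the expectation of the pointwise function $\omega_h(\cdot)$, and that submodularity is preserved under taking nonnegative linear combinations (in particular, under taking expectations).

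More concretely, I would first note that by the definition of $g_h$ and the definition of $\omega_h(T) = (X^{(1)}_T + \cdots + X^{(h)}_T)(\omega)$, we have
\[
g_h(T) = \mathbb{E}_\omega\bigl[\omega_h(T)\bigr]
\]
for every finite subset $T \subseteq \mathcal{U}$. Submodularity of $g_h$ amounts to showing that for any $S \subseteq T \subseteq \mathcal{U}$ and any $Y \in \mathcal{U} \setminus T$,
\[
g_h(S \cup \{Y\}) - g_h(S) \;\geq\; g_h(T \cup \{Y\}) - g_h(T).
\]

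The key step is to invoke Theorem~\ref{thm: ptwise_submodular} pointwise: for every $\omega \in \Omega$,
\[
\omega_h(S \cup \{Y\}) - \omega_h(S) \;\geq\; \omega_h(T \cup \{Y\}) - \omega_h(T).
\]
Taking expectations of both sides and using linearity of expectation gives exactly the submodularity inequality for $g_h$. I do not foresee any obstacle here beyond verifying integrability, which is immediate because all random variables $X \in \mathcal{U}$ are discrete and nonnegative (and we are only ever summing finitely many of them), so $\omega_h(T)$ is integrable for every finite $T$ and linearity of expectation applies without subtlety. Thus the corollary follows from Theorem~\ref{thm: ptwise_submodular} in one line of manipulation once the pointwise inequality is in hand.
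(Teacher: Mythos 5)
Your proposal is correct and matches the paper's argument exactly: the corollary is obtained from Theorem~\ref{thm: ptwise_submodular} ``by taking expectations,'' which is precisely the pointwise-inequality-plus-linearity-of-expectation step you spell out. The integrability remark is fine but unnecessary here, since each $\omega_h(T)$ is a finite sum of nonnegative discrete random variables.
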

which follows from the theorem by taking expectations.

There are many results about the tractability (or approximate
tractability) of optimization problems associated with submodular functions.
For our purposes here, the most useful among these results is
the approximate maximization of arbitrary monotone submodular functions
over sets of size $k$.
This can be achieved by a simple greedy algorithm, which starts with the empty
set, and at each stage, iteratively adds the element providing the
greatest marginal gain; the result is a provable
$(1 - 1/e)$ approximation to the
true optimum \cite{nemhauser-hill-climbing}.  Note that this means we
can find a good approximation of the optimal set {\em even when the
random variables $X_i$ are dependent}. (See Section 4 for further
discussion of this.)

\xhdr{The Canonical Test}
In Section 2, our motivation for studying $f_h$, a measure of potential, was the failure of the \textit{canonical test}, selecting a team according to $\mathbb{E}(X)$. Here we use the property of submodular functions to prove the failure of this test.

\begin{observation}
If $f$ is a submodular function on $\mathcal{P}(U)$, then for every $S \subset U$
$$f(S) \leq \sum_{x \in S} f(\{x\})$$
\end{observation}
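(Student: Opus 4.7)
The plan is to use the standard telescoping argument for submodular functions, exploiting the diminishing-returns formulation of submodularity. I would start by fixing an arbitrary enumeration $S = \{x_1, x_2, \ldots, x_m\}$ of the elements of $S$ and defining the prefix sets $S_i = \{x_1, \ldots, x_i\}$ with $S_0 = \emptyset$.

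Next I would invoke the equivalent diminishing-returns characterization of submodularity: whenever $A \subseteq B$ and $x \notin B$, one has $f(A \cup \{x\}) - f(A) \geq f(B \cup \{x\}) - f(B)$. Applying this with $A = \emptyset$ and $B = S_{i-1}$ (and $x = x_i$) yields, for each $i$,
\[
f(S_i) - f(S_{i-1}) \leq f(\{x_i\}) - f(\emptyset).
\]
Summing these $m$ inequalities, the left-hand side telescopes to $f(S) - f(\emptyset)$, so we obtain
\[
f(S) \leq \sum_{x \in S} f(\{x\}) - (m-1) f(\emptyset).
\]
Since the observation is being applied in the context of the team-performance functions $g_h$, for which $g_h(\emptyset) = 0$ (the sum of the top $h$ values of an empty multiset is $0$), the correction term vanishes and we recover $f(S) \leq \sum_{x \in S} f(\{x\})$ exactly as claimed. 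More generally, the conclusion holds whenever $f(\emptyset) \geq 0$.

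There is no real obstacle here; the only subtlety worth flagging in the writeup is the implicit normalization $f(\emptyset) \geq 0$, which is automatic for the submodular functions considered in this paper. The entire argument is a one-line telescoping of the diminishing-returns inequality applied with $A = \emptyset$, so no auxiliary lemmas beyond the definition of submodularity are needed.
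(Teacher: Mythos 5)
Your telescoping argument is correct and is exactly the ``straightforward application of the defining property'' that the paper invokes without writing out; the paper gives no separate proof of this observation. Your extra care with the $f(\emptyset)$ term is a reasonable touch --- the inequality as stated does implicitly assume $f(\emptyset)\geq 0$, which holds for the functions $g_h$ since $g_h(\emptyset)=0$.
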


This naturally leads to:

\begin{proposition}
If $g_h(\cdot)$ is the team evaluation metric, with $Y_1,...,Y_k$ being the true optimal set, and $X_1,...,X_k$ the random variables with the $k$ highest expectations (with $\mathbb{E}(X_i) \geq \mathbb{E}(X_j)$ if $i \geq j$) then
$$ g_h(Y_1,....,Y_k) \leq \frac{k}{h} g_h(X_1,...,X_k)$$ 
and this bound is tight.
\end{proposition}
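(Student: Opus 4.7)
The plan is to sandwich $g_h(Y_1,\ldots,Y_k)$ and $g_h(X_1,\ldots,X_k)$ between the same linear functional $\sum_i \mathbb{E}(\cdot)$, and then chain the two inequalities via the fact that the $X_i$ have the top $k$ expectations. The inequality itself is really just bookkeeping given the submodularity observation; the interesting part is verifying tightness.

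For the upper bound, I would apply the submodularity observation directly to $g_h$, which is submodular by the corollary. Adopting the convention (used in the proof of Theorem \ref{thm: main1}) that $g_h$ on a set of size $<h$ is the sum of expectations, one has $g_h(\{Y_i\}) = \mathbb{E}(Y_i)$, and hence $g_h(Y_1,\ldots,Y_k) \leq \sum_{i=1}^k \mathbb{E}(Y_i)$. Since the $X_i$ have the $k$ largest expectations in the candidate pool, $\sum_i \mathbb{E}(Y_i) \leq \sum_i \mathbb{E}(X_i)$.

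For the lower bound on $g_h(X_1,\ldots,X_k)$, I would use a pointwise domination: at any sample point $\omega$, the sum of the top $h$ values among $X_1(\omega),\ldots,X_k(\omega)$ is at least $X_{i_1}(\omega)+\cdots+X_{i_h}(\omega)$ for any fixed $h$-subset of indices. Taking expectations and then choosing the $h$ indices with the largest expected values gives $g_h(X_1,\ldots,X_k) \geq \sum_{j=1}^h \mathbb{E}(X_{\sigma(j)}) \geq \tfrac{h}{k}\sum_{i=1}^k \mathbb{E}(X_i)$, the last step because the top $h$ of $k$ numbers has average at least the overall average. Chaining with the upper bound gives $\tfrac{k}{h}g_h(X_1,\ldots,X_k) \geq \sum_i \mathbb{E}(X_i) \geq \sum_i \mathbb{E}(Y_i) \geq g_h(Y_1,\ldots,Y_k)$.

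For tightness, I need an instance where both the submodularity inequality on the $Y$ side and the top-$h$-vs-top-$k$ averaging on the $X$ side are (nearly) saturated. Let each $X_i$ be deterministically equal to $1$, so $g_h(X_1,\ldots,X_k)=h$; and let each $Y_i$ independently take value $M$ with probability $(1-\varepsilon)/M$ and $0$ otherwise, so $\mathbb{E}(Y_i)=1-\varepsilon$ and the $X_i$ remain in the top $k$. For $M$ very large relative to $k$ the number $N$ of nonzero $Y_i$ is at most $h$ with probability $1-o(1)$, so the expected top-$h$ sum equals $M\cdot\mathbb{E}[\min(N,h)]$, which tends to $k(1-\varepsilon)$. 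Sending $M\to\infty$ and then $\varepsilon\to 0$ drives the ratio $g_h(Y_1,\ldots,Y_k)/g_h(X_1,\ldots,X_k)$ to $k/h$. The main (and only) obstacle is engineering this example so that both slacks in the sandwich vanish simultaneously; the lottery structure of the $Y_i$ handles both at once.
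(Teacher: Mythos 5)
Your proposal is correct and follows essentially the same route as the paper: the submodularity bound $g_h(Y_1,\ldots,Y_k)\leq\sum_i\mathbb{E}(Y_i)\leq\sum_i\mathbb{E}(X_i)$, the observation that $g_h(X_1,\ldots,X_k)$ is at least the sum of the $h$ largest expectations (which is at least $\tfrac{h}{k}\sum_i\mathbb{E}(X_i)$), and a lottery-ticket construction (deterministic $X_i$ versus $Y_i$ taking a huge value with tiny probability) for tightness. The paper's tightness example is the same construction with $M=n$ and the roles of the $\varepsilon$-perturbation placed on the $X_i$ instead, so there is nothing substantively different to compare.
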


\begin{proof}
By the observation, we note that
$$g_h(Y_1,....,Y_k) \leq \sum_{i=1}^k g_h(Y_i) = \sum_{i=1}^k \mathbb{E}(Y_i)$$
But as $X_1,...,X_k$ are the elements with the $k$ highest expectations,
$$ \sum_{i=1}^k \mathbb{E}(Y_i) \leq \sum_{i=1}^k \mathbb{E}(X_i) \leq \frac{k}{h} \sum_{i=1}^h \mathbb{E}(X_i)$$
the last inequality following from the assumption on the ordering of the $X_i$. Finally, 
$$ g_h(X_1,...,X_k) \geq g_h(X_1,...,X_h) = \sum_{i=1}^h \mathbb{E}(X_i)$$
the last equality as there are only $h$ values. Putting it together, we have
$$g_h(Y_1,...,Y_k) \leq \frac{k}{h} g_h(X_1,...,X_h) \leq \frac{k}{h} g_h(X_1,...,X_k)$$
as desired. For tightness, let $X_i$ be deterministically $1 + \epsilon$ and $Y_i$ be $n$ with probability $1/n$ for large $n$. Then
$$ g_h(Y_1,...,Y_k) \geq \sum_{i=0}^h in \binom{k}{i} \left(\frac{1}{n} \right)^i \left(1 - \frac{1}{n} \right)^{k-i} \geq n \left(1 - \left(1 - \frac{1}{n}\right)^k \right) = k + O \left(\frac{1}{n} \right) $$
Also,
$$g_h(X_1,...,X_k) = h \left(1 + \epsilon \right)$$
So as $n \rightarrow \infty$ and $\epsilon \rightarrow 0$, we have
$$ g_h(Y_1,...,Y_k) \rightarrow \frac{k}{h} g_h(X_1,...,X_k)$$
\end{proof}

\subsection{Test Scores for Other Submodular Functions}
In the previous section, we saw that for $g = \mathbb{E}(\max(\cdot))$, a submodular function, we were able to define an individual test score with a constant factor approximation to the optimal. Furthermore we were able to define a \textit{family} of submodular functions $g_h$ interpolating between the expected maximum and a sum of expectations, which all had this property. It is therefore natural to wonder whether this is a property shared by many submodular functions. One way to formalize this question might be:

\begin{question}
Given a (potentially infinite) universe $U$, for which associated submodular functions $g$ does there exist a test score $f$
$$ f : U \rightarrow \mathbb{R}^+ $$
such that for any subset $S \subset U$, if $x_1,...,x_k \in S$ are the elements with the $k$ highest values of $f$, then $g(x_1,...,x_k)$ is always a constant-factor approximation to 
$$\max_{T \subset S, |T| = k} g(T)$$
\end{question}

Despite the positive result in Section 2, we find that many common submodular
functions depend too heavily on the interrelations between elements
for independent evaluations of elements to work well. We present two
such examples.

\xhdr{Cardinality Function}

One of the canonical examples of a submodular function is the set cardinality function. Let $U = \mathcal{P}(\mathbb{N})$. Then for $T = \{T_1,...,T_m \}$, with $T_i \in U$, 
$$g(T) = | \cup_{i=1}^m T_i  |$$
This function has a natural interpretation for team performance.
We can imagine each candidate as a set $T_i$, consisting of the
set of perspectives they bring to the task.
$g(T_1, T_2, \ldots, T_m)$ is then the total number of distinct
perspectives that the team members bring collectively;
this objective function is used in arguments that diverse
teams can be more effective
\cite{hong-page-diversity-pnas,marcolino-tambe-team-formation}.

We show a negative result for the use of test scores with this function.

\begin{theorem}
In the above setting, with universe $U$, and $g$ the set cardinality function, no such test score $f$ exists.
\end{theorem}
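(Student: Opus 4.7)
The strategy is to show by contradiction that no test score $f$ can yield a constant-factor approximation for the cardinality function. For any fixed $f$, I will exhibit a pool $S$ and a cardinality $k$ for which $g(\text{top-}k_f(S)) / g(\text{OPT}_k(S)) = o(1)$. The key intuition is that $f$ assigns a single real value to each candidate in isolation, while the cardinality of a union depends on how the chosen sets overlap; so by shaping $S$, we can force $f$'s top choices to be sets that mutually overlap while a disjoint collection forms the true optimum.

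The plan is to construct $S = \mathcal{C} \cup \mathcal{D}$ drawn from $\mathcal{P}(\mathbb{N})$, where $\mathcal{C} = \{K \cup \{x_i\}\}_{i=1}^m$ is a sunflower family with shared core $K$ of size $N$ and distinct private elements $x_i$, and $\mathcal{D} = \{D_1, \ldots, D_m\}$ is a pairwise disjoint family of size-$L$ sets, all disjoint from the sunflower. Any $k$ sets of $\mathcal{C}$ have union of size $N+k$ while any $k$ sets of $\mathcal{D}$ have union $kL$, so tuning $L$ so that $kL$ dominates $N+k$ makes $\mathcal{D}$ the optimum; if $f$ selects its $k$ top-ranked sets entirely from $\mathcal{C}$, the approximation ratio collapses to $(N+k)/(kL) \to 0$. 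If instead $f$ consistently ranks the disjoint-family sets above sunflower sets, I would apply the symmetric construction: a single large set $B$ of size $M$ together with many pairwise disjoint singletons, in which $f$'s preference for singletons causes top-$k$ to realize union of size only $k$ while the optimum containing $B$ achieves $M + k - 1$.

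The main obstacle is proving that one of these two symmetric constructions is always realizable for a given $f$. I would invoke the infinite freedom afforded by $\mathbb{N}$: among the infinitely many realizations of sunflower structures (varying $K$ and the $x_i$) and of disjoint families (varying the $D_j$), a pigeonhole argument on the distribution of $f$-values should guarantee that for some choice of core, private elements, and blocks, either the chosen sunflower sets all outrank the chosen disjoint sets in $f$, or the reverse ordering is forced. The deepest part of the proof is making this pigeonhole argument robust against an adversarial $f$ that might try to entangle the $f$-values of sunflower-style and disjoint-style sets; the crucial point is that $f$ is a single real-valued function on an infinite domain and so cannot be simultaneously consistent with the overlap-dependent preferences of $g$ across all structural configurations.
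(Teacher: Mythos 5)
Your high-level strategy---pit a heavily overlapping family against a pairwise disjoint family, and use the freedom of infinitely many realizations to force $f$ to misrank one configuration---is the right instinct and is the same instinct behind the paper's proof. But there is a genuine gap exactly where you flag ``the main obstacle'': you never carry out the pigeonhole that forces $f$'s top $k$ to concentrate inside the overlapping family, and that step is the entire content of the theorem. Two concrete problems. First, a mixed selection is harmless to $f$: if even $k/2$ of its top-$k$ picks come from your disjoint family $\mathcal{D}$, the union already has size at least $(k/2)L$, a $2$-approximation of $kL$, so you must force near-total concentration on $\mathcal{C}$, not merely argue that $f$ ``cannot be simultaneously consistent'' with both structures. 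Second, your fallback (``if instead $f$ consistently ranks the disjoint-family sets above sunflower sets, apply the symmetric construction'') is not a dichotomy over anything well-defined: $f$ is an arbitrary function of the identity of each set, so its behavior on one gadget says nothing about its behavior on a structurally different gadget, and ``either the sunflower sets all outrank the disjoint sets or the reverse ordering is forced'' is false as stated---$f$ can interleave. The quantifier order ($f$ first, then the pool) does give you adaptive freedom, but exploiting it requires comparing $f$-values across many disjoint copies of a \emph{single} gadget in a structured way, which your sketch defers with ``should guarantee.''

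For contrast, the paper closes this gap with one self-dual gadget. Take disjoint blocks $U_1,U_2,\dots$ of size $k+1$ and let $V_i$ be the $k+1$ many $k$-element subsets of $U_i$: any $k$ members of a single $V_i$ have union of size only $k+1$ (your $\mathcal{C}$), while one member from each of $k$ distinct blocks gives union $k^2$ (your $\mathcal{D}$), so the same objects play both roles. Ordering each $V_i$ by $f$-value as $X_{i1},\dots,X_{i,k+1}$, call $V_j$ \emph{bad} for $V_1$ if $f(X_{j1})\le f(X_{12})$. If $k$ blocks were bad for $V_1$, the pool consisting of $X_{12},\dots,X_{1,k+1}$ together with the $f$-minima of those $k$ blocks is a counterexample ($f$ picks the $k$ overlapping sets from $V_1$, union $k+1$, versus the disjoint optimum $k^2$). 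Hence each of $V_1,\dots,V_k$ has at most $k$ bad blocks, so among $k^2+1$ further blocks some $V_j$ is good for all of them, and then the pool $\{X_{11},\dots,X_{k1},X_{j1},\dots,X_{jk}\}$ is a counterexample in the other direction ($f$ picks the $k$ sets from $V_j$). This explicit bad/good dichotomy is precisely the ``robust pigeonhole'' your proposal leaves unproved; without some version of it the argument is incomplete.
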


\begin{proof}
Suppose for contradiction such an $f$ did exist. Assume ties are broken in the worst way possible (no information is gained from a tie.) Let $U_1,U_2,...$ be disjoint intervals in  $\mathbb{N}$ with
$$ U_i = \{ (i-1)(k + 1) + 1,..., i(k + 1) \}$$
And let 
$$ V_i = \{ S \subset U_i : |S| = k \}$$
i.e. the set of all size $k$ subsets of $U_i$.
We will find it useful to label elements of $V_i$ based on their $f$ value, so let
$$ V_i = \{X_{i1},...,X_{ik+1} \} $$
with 
$$f(X_{i1}) \leq f(X_{i2}) ... \leq f(X_{ik+1})  $$
Call a set $V_j$, $j > k$ {\em bad} with respect to $V_1$ if 
$$f(X_{j1}) \leq f(X_{12})$$ 
and {\em good} otherwise.
Note that we cannot have more than $k$ $V_j$ bad with respect to $V_1$. Else, supposing $V_{n_1},...,V_{n_k}$ were all bad with respect to $V_1$, in the set 
$$ S = \{ X_{12},...,X_{1k+1}, X_{{n_1}1},...,X_{{n_k}1} \}$$
the $k$ set chosen by $f$ would be $X_{12},...,X_{1k+1}$, for a $g$ value of $k+1$, but the optimum is given by $X_{{n_1}1},...,X_{{n_k}1}$, for a $g$ value of $k^2$ - a factor of $ \approx k$ difference.

So there are at most $k$ bad sets with respect to $V_1$. But the same logic applies to $V_2,...,V_k$. So in $V_{k+1},...,V_{k^2 + k + 1}$ there is at least one set, say $V_{j}$, that is good with respect to $V_1,...,V_k$. But then in the set
$$S = \{ X_{11},...,X_{k1}, X_{j1},...,X_{jk} \}$$
the $k$ set chosen by $f$ would be $X_{j1},...,X_{jk}$, with a $g$ value of $k + 1$, but the optimum would be $X_{11},...,X_{k1}$ with a $g$ value of $k^2$. 
\end{proof}

\xhdr{Linear Matroid Rank Functions}
Another class of measures of team performance is given by
assigning each candidate a vector $v_i \in \mathbb{R}^m$,
and the performance of a team $v_1, v_2, \ldots, v_k$
is the rank of the span of the set of corresponding vectors.
Such a measure has a similar motivation to the previous
set cardinality example: if the team is trying to solve a
classification problem over a multi-dimensional feature space,
then $v_i$ may represent the weighted combination of features
that candidate $i$ brings to the problem,
and the span of $v_1, v_2, \ldots, v_k$ establishes the effective
number of distinct dimensions the team will be able to use.

More generally, the rank of the span of a set of vectors is 
a matroid rank function, and we can ask the question in that context.
Given a matroid $(V, \mathcal{I})$ and a set $S  \subset V$, the matroid rank function $g$ is
$$ g(S) = \max \{|T| : T \subset S, T \in \mathcal{I} \} $$
i.e. the maximal independent set contained in $S$. It is well known that matroid rank functions are submodular \cite{birkhoff-matroid-rank}.
To come back to our vector space example, 
we show that when our underlying set is $\mathbb{R}^m$, and $\mathcal{I}$ are subsets that are linearly independent, no single element test can capture the relation between vectors well.

\begin{theorem} \label{thm: matroid_rank_thm}
For $U, g$ as above, no test score with good approximation exists.
\end{theorem}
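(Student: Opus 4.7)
The plan is to replay the cardinality-function proof almost verbatim, replacing the disjoint intervals $U_i \subset \mathbb{N}$ with disjoint one-dimensional subspaces of $\mathbb{R}^m$. Concretely, I would fix $m$ large (say $m \geq k^2+k+1$) and linearly independent vectors $w_1, \ldots, w_m \in \mathbb{R}^m$ (e.g.\ the standard basis), and set
\[
V_i \;=\; \{\, w_i,\, 2w_i,\, \ldots,\, (k+1)w_i \,\},
\]
a collection of $k+1$ distinct vectors all lying in the one-dimensional subspace $\langle w_i \rangle$. The universe is $U = \bigcup_i V_i$. The essential observation is that the linear-matroid rank of any $k$ vectors drawn from a single $V_i$ is $1$ (all are scalar multiples of $w_i$), while $k$ vectors taken one each from distinct groups $V_{i_1},\ldots,V_{i_k}$ are linearly independent and have rank exactly $k$. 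This rank gap of $1$ versus $k$ plays the role of the $k+1$ versus $k^2$ gap from the set-cardinality argument.

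With the construction in hand, the counting argument is a line-by-line translation. Assuming for contradiction that some $f$ gives a constant-factor approximation, order each $V_i$ as $X_{i1}, \ldots, X_{i,k+1}$ in increasing $f$-value and call $V_j$ \emph{bad} with respect to $V_i$ when $f(X_{j1}) \leq f(X_{i2})$. If more than $k$ sets $V_{n_1},\ldots,V_{n_k}$ were bad with respect to $V_1$, then on $S = \{X_{12},\ldots,X_{1,k+1}\} \cup \{X_{n_1 1},\ldots,X_{n_k 1}\}$ the top-$k$ under $f$ would be $X_{12},\ldots,X_{1,k+1}$ (all in $\langle w_1\rangle$, rank $1$), while $\{X_{n_1 1},\ldots,X_{n_k 1}\}$ has rank $k$. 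Hence at most $k$ sets can be bad with respect to each of $V_1,\ldots,V_k$, so by pigeonhole at least one $V_j$ among $V_{k+1},\ldots,V_{k^2+k+1}$ is good with respect to all of $V_1,\ldots,V_k$; then on $S = \{X_{11},\ldots,X_{k1}\} \cup \{X_{j1},\ldots,X_{jk}\}$ the top-$k$ under $f$ would be $X_{j1},\ldots,X_{jk}$ (rank $1$), while $\{X_{11},\ldots,X_{k1}\}$ has rank $k$. Either branch yields a ratio of $k$ between the test's team and the optimum, contradicting constant-factor approximation for $k$ sufficiently large.

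The only genuine obstacle is getting the right translation of the combinatorial construction: in the cardinality proof each ``candidate'' was itself a $k$-element subset, whereas here each candidate is a single vector and the maximum rank of a $k$-team is only $k$, so we need the rank within one group to collapse to something much smaller than $k$. Using $k+1$ distinct parallel vectors in a common one-dimensional subspace accomplishes exactly this collapse to rank $1$, while across-group selections still attain full rank $k$; this gives the $k$-vs-$1$ ratio that drives the contradiction. Once this construction is in place I do not anticipate any further difficulty, since the rest is just the pigeonhole bookkeeping already done in the cardinality case.
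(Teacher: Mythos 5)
Your proof is correct, but it takes a genuinely different route from the paper's. The paper argues analytically: it first shows that $\{f(\lambda x):\lambda\in\mathbb{R}\}$ must be bounded along every line through the origin (via a rank-$k$ versus rank-$1$ gap against the basis vectors), then extracts convergent subsequences of $f(e_i/n)$ by Bolzano--Weierstrass and does a case analysis on the limits $b_1\geq\cdots\geq b_k$ to manufacture a fooling set. You instead bypass all of the limit analysis by transplanting the pigeonhole argument from the set-cardinality theorem: $k^2+k+1$ lines, $k+1$ points per line, the same notion of \emph{bad}, and the same two fooling sets, using the collapse to rank $1$ within a line exactly where the cardinality proof uses containment in a single $U_i$. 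This is shorter and more elementary, and it makes transparent that the two negative results are really the same combinatorial phenomenon. The one detail to fix is your choice of directions: taking $w_1,\ldots,w_{k^2+k+1}$ to be standard basis vectors forces the ambient dimension to be at least $k^2+k+1$, whereas the paper's argument only needs $m\geq k$. Your argument does not actually require the $w_i$ to be pairwise independent as a whole family --- it only needs that any $k$ of the chosen directions are linearly independent --- so replacing the standard basis by $k^2+k+1$ vectors in general position in $\mathbb{R}^k$ (e.g., points on the moment curve $(1,t,\ldots,t^{k-1})$ for distinct $t$) recovers the full strength of the theorem for every $m\geq k$. With that substitution the proof is complete.
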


The proof of this theorem relies on the fundamental property of $\mathbb{R}$. We show that for any sequence along a specific direction, the $f$ values for this sequence must be {\em bounded}. By the defining property of $\mathbb{R}$, each sequence then has a convergent subsequence. Looking at these convergent subsequences along each of $k$ coordinate axes $e_1,...,e_k$, we can then pick our {\em bad set} fooling $f$ into choosing $O(k)$ points in the same direction. See the Appendix for a full proof.

\subsection{Result for a Supermodular Function}
The above two examples show bad cases for submodular functions. As is expected, {\em supermodular} functions also have a negative answer to Question 3.5. 

A classic example of a supermodular function is the {\em edge count} function. 
\begin{definition}
Given a graph $G = (V,E)$, and a set $S \subset V$, $g(S)$ is the number of edges in the induced subgraph with vertex set $S$. 
\end{definition}

It is easy to check 
that $g$ is supermodular. 
$g$ also forms our bad example for supermodular functions.

\begin{theorem} \label{thm: supermodular}
Let $U$ be a very large graph, containing at least $N$ disjoint complete graphs with $k+1$ vertices - i.e. $K_{k+1}$. Then there is no test score $f$ with a constant (independent of $k$) order approximation property to the optimal $k$ set with respect to $g$ 
\end{theorem}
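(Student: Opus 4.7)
The plan is to exhibit, for any candidate test $f$, a single subset $S$ of $2k+1$ vertices on which the top-$k$ by $f$ is catastrophic: $f$ is forced to choose $k$ vertices spanning no edges, while the true optimum inside $S$ is a $K_k$ with $\binom{k}{2}$ edges. This gap immediately precludes any constant approximation factor and gives the desired contradiction.

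First I would set up the adversarial set. Let $G_1,\ldots,G_N$ denote the disjoint $K_{k+1}$-subgraphs guaranteed in $U$; only $N \geq k+1$ will actually be needed. For each $i$, put $m_i = \max_{v \in V(G_i)} f(v)$ and let $u_i$ be any vertex of $G_i$ attaining this maximum. After relabeling so that $m_1 = \min_i m_i$, define
$$ S \;=\; V(G_1) \,\cup\, \{u_2, u_3, \ldots, u_{k+1}\}. $$

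Next I would analyze how $f$ ranks $S$. Every vertex $v \in V(G_1)$ satisfies $f(v) \leq m_1 \leq m_j = f(u_j)$ for $j = 2,\ldots,k+1$, so with adversarial tie-breaking (the same convention used in the cardinality proof) the $k$ highest-$f$ vertices in $S$ are exactly $\{u_2,\ldots,u_{k+1}\}$. Because these lie in $k$ distinct $K_{k+1}$'s, the induced subgraph has no edges, so the test's choice has $g$-value $0$. On the other hand, any $k$ vertices drawn from the full $K_{k+1}$ sitting inside $S$ form a $K_k$, giving optimal $g$-value $\binom{k}{2}$. No positive constant $c$ can satisfy $c \cdot \binom{k}{2} \leq 0$, contradicting the claimed approximation guarantee.

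The main obstacle I anticipate is bookkeeping around ties: if several of the $m_j$'s happen to equal $m_1$, a naive reading would let $f$ substitute some vertex of $G_1$ for one of the $u_j$'s and thereby salvage $\binom{k-1}{2}$ edges. The adversarial tie-breaking convention rules this out, since the $k$ values $f(u_2),\ldots,f(u_{k+1})$ are jointly the largest in $S$ (possibly tied with $m_1$ but never exceeded), and the adversary is free to resolve any tie by picking the $k$-tuple that minimizes $g$, namely the all-isolated one. Everything else --- the existence of $k+1$ disjoint $K_{k+1}$'s by hypothesis, the relabeling step, and the edge count on $V(G_1)$ --- is immediate, so this one bookkeeping point is the only subtle part of the argument.
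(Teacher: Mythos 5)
Your proof is correct, and it reaches the paper's contradiction by a more economical route. Both arguments hinge on the same catastrophic configuration --- a set in which the top-$k$ by $f$ consists of one vertex from each of $k$ distinct cliques (zero induced edges) while the optimum is $k$ vertices of a single clique ($\binom{k}{2}$ edges) --- but they differ in how the ``sacrificial'' clique is located. The paper mimics its cardinality-function proof: it defines a clique $K^j$ to be \emph{bad} relative to $K^1$ when $f(v_{j(k+1)}) \geq f(v_{1k})$, shows by contradiction that each of $K^1,\ldots,K^k$ has fewer than $k$ bad cliques, and then pigeonholes among $K^{k+1},\ldots,K^{k^2+k+1}$ to find a clique that is good with respect to all of them; this requires $N \geq k^2+k+1$ disjoint copies of $K_{k+1}$. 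You instead take the clique $G_1$ whose maximum $f$-value $m_1$ is smallest, so that every vertex of $G_1$ is automatically dominated by the top representative $u_j$ of every other clique; this collapses the two-stage counting argument into a single extremal choice and needs only $k+1$ cliques, so it even proves the theorem under a weaker hypothesis on $N$. Your handling of ties is consistent with the paper's stated convention (ties broken adversarially, as in the cardinality proof), and indeed the paper's own argument relies on the same convention since its definition of \emph{bad} uses a non-strict inequality. The one cosmetic caveat is that for $k=1$ both the test's set and the optimum have value $0$, so the claim is only meaningful for $k \geq 2$ --- but that affects the paper's version equally.
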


The proof is very similar to the cardinality function case. In that, we wanted to avoid picking subsets of the same set; in this, we would like to pick as many vertices in a single clique as possible. We adjust the notion of {\em bad} accordingly to ensure this doesn't happen, and arrive at our desired contradiction identically to before.

A particularly interesting feature of this case, is that, without the canonical statistical test for submodular functions, we can have an arbitrarily bad approximation ratio - even if $f$ is defined to be constant on each vertex, the counterexample demonstrates that $f$ may pick a set with {\em no} induced edges.

\section{Hill Climbing and Optimality}

For most non-trivial submodular functions, finding the optimal
solution is computationally intractable. 
This is the case for the maximum of a set of random variables
that are not necessarily independent.
In particular, suppose that $S = \{X_1, X_2, \ldots, X_n\}$
is a set of dependent random variables.
For a set $T$ of them, we can define $g(T)$ to be the
expected maximum of the random variables in $T$.
We now argue that maximizing $g(T)$ is an NP-hard problem in general.
We will do this by reducing an instance of {\em Set Cover}
to the problem.

Recall that in set cover, we have a universe $U$, and a set $ T = \{S_1,..,S_n\}$ of subsets of $U$ i.e. $S_i \subset U$ for all $i$. We wish to know if there is a subset $T' \subset T$, with $|T'| \leq k$, such that 
$ \bigcup_{S_i \in T'} S_i = U.$
To model this with random variables, let the underlying sample space be $U$, and each $X_i = 1_{S_i}$ the indicator function for the set $S_i$. Then it is easy to see that there exists a team size $k$ with expected maximum $1$ if and only if there exists $T'$ as above, $|T'| \leq k$. So maximizing the expected maximum of a set size $k$ provides an answer to the NP complete decision problem.   

In terms of approximation, 
we can apply the general hill-climbing result
mentioned earlier \cite{nemhauser-hill-climbing} to 
provide a $(1 - 1/e)$ approximation for finding the 
set of $k$ dependent random variables with the largest expected maximum. 

A natural question is whether independence is a strong enough assumption to guarantee a better approximation ratio. Indeed, we may even be tempted to ask
\begin{question} \label{ques: hill-climb_max_general}
If $X_1,...,X_n$ are (discrete) independent random variables, does hill-climbing find the size $k$ set maximizing the expected maximum?
\end{question}

Unfortunately, this is false. For a simple counterexample, take $X$ taking positive values $(9/5, 6/5)$ with respective probability masses $(1/3,1/3)$, $Y$ deterministically $1 + \epsilon$ for $\epsilon$ very small, and $Z$ taking a positive value $3/2$ with probability $2/3$. Then 
$\mathbb{E}(Y) > \mathbb{E}(X), \mathbb{E}(Z)$
which means in the first step, hill-climbing would choose $Y$. But,
$$ \mathbb{E}(\max(X,Z)) > \mathbb{E}(\max(Y,Z)), \mathbb{E}(\max(X,Y))$$
so hill-climbing would not find the optimal solution. In this counterexample, $Y,Z$ are both examples of weighted Bernoulli random variables.

\begin{definition}
We say a random variable $X$ has the weighted Bernoulli distribution, if $X = x$ for some $x \geq 0$ with probability $p$, and $X = 0$ otherwise. 
\end{definition}

What is surprising is that when {\em all} our random variables are weighted Bernoulli, Question \ref{ques: hill-climb_max_general} has an affirmative answer.

\begin{theorem} \label{thm: opt_greedy}
Given a pool of random variables, each of weighted Bernoulli distribution, performing hill-climbing with respect to $\mathbb{E}(\max(\cdot))$ finds the size $k$ set maximizing the expected maximum. 
\end{theorem}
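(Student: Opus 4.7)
The plan is to prove the theorem by induction on $k$, with the inductive step reduced to a key exchange lemma: greedy's first selection $g_1$ belongs to some optimal size-$k$ set. Two formulas specific to weighted Bernoullis underlie everything. Sorting $T$ in decreasing value order, I would write the \emph{first-success} decomposition
\[
\mathbb{E}(\max T) = \sum_j v_j p_j \prod_{l < j}(1 - p_l)
\]
obtained by conditioning on which item fires first; and the marginal gain of adding $X$ to $T$ specializes to $\delta_X(T) = p_X \mathbb{E}((v_X - \max T)^+) = p_X \int_0^{v_X} \mathbb{P}(\max T \leq s)\,ds$, which immediately identifies $g_1 = \arg\max_i p_i v_i$ as greedy's first pick.

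To make the induction work cleanly I would strengthen the statement to: for any fixed ``prefix'' set $S$ of weighted Bernoullis, greedy's picks from the remaining items produce a size-$k$ extension $T$ maximizing $\mathbb{E}(\max(S \cup T))$. The base cases $k \in \{0, 1\}$ are immediate. For the inductive step, granted the exchange lemma, I would fix an optimal $O^*$ containing $g_1$, fold $g_1$ into the prefix (set $S' = S \cup \{g_1\}$), and observe that the marginal-gain order on the remaining items against $S' \cup T$ is exactly what greedy uses on the original objective after its first step; the inductive hypothesis with prefix $S'$ then closes the argument.

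The exchange lemma itself would be established by the following manipulation. For any candidate $o \in O^*$ to swap out, the weighted-Bernoulli structure gives
\[
\delta_{g_1}(S \cup O^* \setminus \{o\}) - \delta_o(S \cup O^* \setminus \{o\}) = \int_0^\infty Q_{S \cup O^* \setminus \{o\}}(s)\,\bigl[f_{g_1}(s) - f_o(s)\bigr]\,ds,
\]
where $Q_T(s) = \prod_{i \in T:\, v_i > s}(1 - p_i)$ and $f_X(s) = p_X \mathbf{1}[s < v_X]$. The natural candidate to swap is $o^* = \arg\min_{o \in O^*} v_o$; combining $g_1$'s maximal $p v$ with the nested-interval geometry of the profiles $f_X$ (all intervals starting at $0$) is intended to force the integral non-negative for $o = o^*$, yielding the exchange.

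The main obstacle is precisely this last step. Generic submodularity gives only a $(1 - 1/e)$ approximation, so merely averaging over $o \in O^*$ does not suffice; one must genuinely use that the $f_X$ profiles form a chain of intervals all anchored at $0$. The careful bookkeeping of how $Q_{S \cup O^* \setminus \{o^*\}}(s)$ interacts with the sign of $f_{g_1}(s) - f_{o^*}(s)$ across the level parameter $s$ --- and the verification that $o^*$ (rather than some other element of $O^*$) is the correct swap candidate --- is the technical heart of the proof.
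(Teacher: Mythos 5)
Your plan is correct and takes a genuinely different route from the paper. The paper works with the recursive first-success form $\mathbb{E}(\max) = p_1x_1 + (1-p_1)\mathbb{E}(\max(X_2,\ldots,X_k))$ and proves two discrete exchange lemmas --- one allowing substitution of a variable that dominates another in both nonzero value and expectation, and a second showing that a beneficial swap remains beneficial after appending smaller-valued variables --- then runs an induction over greedy's steps, with a case analysis locating the optimal set's remaining elements in the value order relative to greedy's picks and eliminating them one configuration at a time. Your layer-cake formulation ($\mathbb{E}(\max T)=\int_0^\infty(1-Q_T(s))\,ds$, hence $\delta_X(T)=\int_0^\infty Q_T(s)f_X(s)\,ds$) compresses all of this into a single exchange lemma applied at the first step against an arbitrary prefix, and the prefix-strengthened induction is the standard and correct way to bootstrap it. What your version buys is that it isolates exactly which feature of weighted Bernoullis is used (every profile $f_X$ is an interval anchored at $0$, scaled by $p_X$); what the paper's version buys is two reusable discrete swap facts and no integral manipulation.

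The step you flag as the obstacle does go through, and with less bookkeeping than you anticipate. Write $T'=S\cup O^*\setminus\{o^*\}$ and factor $Q_{T'}(s)=Q_S(s)\,R(s)$ where $R(s)=\prod_{i\in O^*\setminus\{o^*\}:\,v_i>s}(1-p_i)$. Because $o^*$ minimizes value over the extension $O^*$, the function $R$ equals the constant $c_0=\prod_{i\in O^*\setminus\{o^*\}}(1-p_i)$ for all $s<v_{o^*}$, and satisfies $R(s)\ge c_0$ for $s\ge v_{o^*}$. On the other hand, for $s\ge v_{o^*}$ the bracket $f_{g_1}(s)-f_{o^*}(s)$ reduces to $f_{g_1}(s)\ge 0$, so any negative values of the bracket occur only where $R\equiv c_0$ exactly. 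Hence
$$\int_0^\infty Q_{T'}(s)\bigl[f_{g_1}(s)-f_{o^*}(s)\bigr]\,ds\;\ge\;c_0\int_0^\infty Q_S(s)\bigl[f_{g_1}(s)-f_{o^*}(s)\bigr]\,ds\;=\;c_0\bigl(\delta_{g_1}(S)-\delta_{o^*}(S)\bigr)\;\ge\;0,$$
the final inequality holding because greedy chose $g_1$ to maximize the marginal gain against $S$. No finer analysis of sign changes across $s$ is required, and this computation is precisely why $o^*$, rather than an arbitrary element of $O^*$, is the right swap candidate: its profile $f_{o^*}$ vanishes exactly on the region where $R$ can exceed $c_0$. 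With this exchange lemma in hand, your induction is complete.
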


In the context of forming teams, we can think of candidates
with weighted Bernoulli distributions as having a sharply
``on-off'' success pattern --- they have a single way to succeed,
producing a given utility, and otherwise they provide zero utility.

For $X$ as above, we will find it convenient to denote $X$ as $(p,x)$. For two weighted Bernoulli random variables $X = (p,x)$ and $Y = (q,y)$, we use $X \geq Y$ to mean $x \geq y$.  For $X_i = (p_i, x_i)$, with $X_1 \geq .. \geq X_k$, the expected maximum has an especially clean form:
$$ \emax = p_1x_1 + (1 - p_1)p_2x_2 + ... \prod_{i=1}^{k-1}p_kx_k $$
Rewriting this slightly, it also has an intrinsically recursive structure
$$ \emax = p_1x_1 + (1 - p_1) \mathbb{E}(\max(X_2,...,X_k))$$

As a step towards proving Theorem \ref{thm: opt_greedy}, we need two useful lemmas on when random variables can be exchanged without negatively affecting the expected maximum. Assume from now on all random variables are weighted Bernoulli.

Our first lemma shows that if one random variable dominates another in both nonzero value and expectation, we may always substitute in the dominating variable. So given two random variables with the same expected value, we always prefer the 'riskier' random variable.

\begin{lemma} \label{lemma: large_val_better}
If $X \geq Y$, and $\mathbb{E}(X) \geq \mathbb{E}(Y)$, then for any $X_1,...,X_k$, 
$$ \mathbb{E}(\max(X,X_1,...,X_k)) \geq \mathbb{E}(\max(Y,X_1,...,X_k))$$
\end{lemma}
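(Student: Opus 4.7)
My plan is to condition on the maximum of the remaining variables, $M = \max(X_1,\ldots,X_k)$, and reduce to a pointwise scalar inequality. Since $X, Y, X_1, \ldots, X_k$ are mutually independent, $M$ is independent of both $X$ and $Y$, so it suffices to show that for every fixed real number $M \geq 0$ and weighted Bernoulli variables $X=(p,x)$, $Y=(q,y)$ with $x \geq y$ and $px \geq qy$,
\[
\mathbb{E}[\max(X,M)] \;\geq\; \mathbb{E}[\max(Y,M)].
\]
The lemma then follows by taking expectation in $M$.

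\textbf{Case analysis on $M$.} Next I would split on the position of $M$ relative to $y$ and $x$. When $M \geq x$, both sides equal $M$, since neither $X$ nor $Y$ can exceed $x$ on its nonzero branch. When $y \leq M < x$, the right-hand side equals $M$ (because $Y$ is always at most $y \leq M$), while the left-hand side equals $px + (1-p)M = M + p(x-M) \geq M$. Both of these cases are essentially immediate and use only $x \geq y$.

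\textbf{The main case, $M < y$.} Here $\mathbb{E}[\max(X,M)] = px + (1-p)M$ and $\mathbb{E}[\max(Y,M)] = qy + (1-q)M$, so the gap between them is $(px - qy) - (p-q)M$. When $p \leq q$, the term $-(p-q)M$ is nonnegative and the gap is at least $px - qy \geq 0$. The subtle subcase is $p > q$: neither $x \geq y$ alone nor $px \geq qy$ alone suffices, and one must combine them via
\[
px - qy \;=\; p(x-y) + (p-q)y \;\geq\; (p-q)y \;>\; (p-q)M,
\]
where the final strict inequality uses $y > M$ together with $p - q > 0$. This is the only real obstacle in the argument: recognizing that exactly in this subcase both hypotheses of the lemma must be invoked in tandem. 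Once this identity is in hand, the pointwise inequality holds in every regime, and integrating over $M$ (using independence of $M$ from $X$ and $Y$) completes the proof.
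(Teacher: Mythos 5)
Your proof is correct, but it takes a genuinely different route from the paper's. You condition on $M=\max(X_1,\ldots,X_k)$, which by independence reduces the lemma to the pointwise scalar inequality $\mathbb{E}[\max(X,m)]\geq\mathbb{E}[\max(Y,m)]$ for each $m\geq 0$; the three-case split on the position of $m$ relative to $y$ and $x$ is handled correctly, and you rightly identify that only the subcase $m<y$, $p>q$ forces both hypotheses to be used together, via $px-qy=p(x-y)+(p-q)y\geq(p-q)y\geq(p-q)m$. The paper instead expands the expected maximum using its recursive structure for value-ordered weighted Bernoulli variables, splitting $X_1,\ldots,X_k$ into those above and below $Y$ and introducing the quantities $b$ (expected max of the upper block), $s$ (probability the upper block is all zero), and $c$ (expected max of the lower block); it first reduces to $p\leq q$ by a monotonicity observation and then verifies the algebraic inequality $px-pb-sqy+(q-p)sc\geq 0$ using $b/(1-s)\leq x$ and $px\geq qy$. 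Your conditioning argument is shorter, avoids the wlog orderings and the delicate bookkeeping, and in fact proves something slightly stronger: it needs only that $X_1,\ldots,X_k$ are nonnegative and independent of $X$ and of $Y$, not that they are themselves weighted Bernoulli or mutually independent. The paper's decomposition, on the other hand, sets up exactly the notation ($b,s,c,d$) that is reused in Lemma 4.5 and in the induction for Theorem 4.4, which is presumably why the authors chose it.
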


\begin{proof}
{ \em (Lemma \ref{lemma: large_val_better})}
Assume $X_i$ are in value order. Wlog assume $X \geq X_i$ for all $i$ (an almost identical proof works if that is not the case) and that $X_t \geq Y \geq X_{t+1}$. Letting $X = (p,x)$, $X_i = (p_i, x_i)$. Also, assume that $Y = (q, y)$. By the recursive structure of the expected maximum for weighted Bernoulli random variables,
$$\mathbb{E}(\max(X,X_1,...,X_k)) = px + (1 - p)b + (1 - p)sc$$
and that
$$ \mathbb{E}(\max(Y,X_1,...,X_k)) = b + sqy + s(1 - q)c $$
where
\begin{align*}
b &= \mathbb{E}(\max(X_1,...,X_t)) \\
s &= \mathbb{P}(X_1,...,X_t = 0) \\
c &= \mathbb{E}(\max(X_{t+1},...,X_k))\\
\end{align*}
Note $b + sc \leq x$ as $X \geq X_1,...,X_k$. So, if $p \geq q$, 
$$ px + (1 - p)(b + sc) \geq qx + (1 - q)(b + sc) $$
The left hand side of the above is just $\mathbb{E}(\max(X,X_1,...,X_k))$, so we can assume $p \leq q$ by decreasing $p$ to $q$ if necessary, and this will only decrease the value of $\mathbb{E}(\max(X,X_1,...,X_k))$. Now, note that 
$$ \mathbb{E}(\max(X,X_1,...,X_k)) \geq \mathbb{E}(\max(Y,X_1,...,X_k)) \iff px - pb - sqy + (q - p)sc \geq 0 $$
But $b/(1 -s)$ is a convex combination of $X_1,...,X_t$, so $b/(1 - s) \leq x$. So,
$$ px - pb - sqy + (q - p)sc \geq spx - sqy + (q - p)sc $$
Finally, by assumption, $\mathbb{E}(X) \geq \mathbb{E}(Y)$, and $p \leq q$, so the result holds.
\end{proof}

The next lemma describes a slightly technical variant of the above substitution rule: 

\begin{lemma} \label{lemma: wb_swap}
Let $X \geq Y$, and 
$\mathbb{E}(\max(X, X_1,...,X_k)) \geq \mathbb{E}(\max(Y,X_1,...,X_k)).$
Then if $Y_1,...,Y_m$ such that $Y \geq Y_i$ for all $i$, 
$$\mathbb{E}(\max(X, X_1,...,X_k,Y_1,...,Y_m)) \geq \mathbb{E}(\max(Y,X_1,...,X_k,Y_1,...,Y_m))$$
\end{lemma}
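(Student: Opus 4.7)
The plan is to condition on the values of $X$ and $Y$ in each of the four expected maxima, exploiting the fact that since every $Y_j$ has nonzero value at most $y \leq x$, each $Y_j$ is pointwise dominated by the constant $x$ (and by $y$). Set $p = \mathbb{P}(X = x)$, $q = \mathbb{P}(Y = y)$, and define
$$A = \mathbb{E}(\max(x, X_1, \ldots, X_k)), \qquad B = \mathbb{E}(\max(y, X_1, \ldots, X_k)),$$
$$C = \mathbb{E}(\max(X_1, \ldots, X_k)), \qquad D = \mathbb{E}(\max(X_1, \ldots, X_k, Y_1, \ldots, Y_m)).$$
Conditioning on $X$ and using $Y_j \leq x$ to drop the $Y_j$'s whenever $X = x$ gives $\mathbb{E}(\max(X, X_1, \ldots, X_k, Y_1, \ldots, Y_m)) = pA + (1-p)D$ and $\mathbb{E}(\max(X, X_1, \ldots, X_k)) = pA + (1-p)C$. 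The parallel decompositions for $Y$ (using $Y_j \leq y$ to drop the $Y_j$'s when $Y = y$) yield $qB + (1-q)D$ and $qB + (1-q)C$. After this reduction, the hypothesis is equivalent to $p(A - C) \geq q(B - C)$ and the goal is $p(A - D) \geq q(B - D)$.

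Next I would record three structural inequalities: $A \geq B$ (from $x \geq y$), $D \geq C$ (adding nonnegative variables can only increase the expected max), and $D \leq A$ (each $Y_j \leq x$ pointwise, so the deterministic $x$ dominates them). To finish, I split on the sign of $p - q$. If $p \geq q$, the hypothesis is not even needed: using $A - D \geq 0$ and then $A \geq B$, one gets $p(A - D) \geq q(A - D) \geq q(B - D)$. If $p \leq q$, I subtract the two target inequalities directly:
$$[p(A-D) - q(B-D)] - [p(A-C) - q(B-C)] = (q - p)(D - C) \geq 0,$$
so the goal difference exceeds the hypothesis difference, which is nonnegative by assumption.

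The main obstacle is really just recognizing the right four-quantity decomposition; once $A, B, C, D$ are isolated, the remaining argument is a short two-case computation. The subtle point is the inequality $D \leq A$, which genuinely relies on the hypothesis that $Y \geq Y_j$ for all $j$ (combined with $X \geq Y$): without comparable nonzero values, the clean pointwise dominance of the $Y_j$'s by the deterministic $x$ that enabled conditioning on $\{X = x\}$ to eliminate them would fail, and one could not reduce to the simple linear combinations above.
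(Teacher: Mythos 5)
Your proof is correct, and it reaches the same essential mechanism as the paper's proof by a cleaner and more self-contained route. The paper argues by contradiction: it orders the $X_i$ by value, places $Y$ between $X_t$ and $X_{t+1}$, writes everything via the recursive formula in terms of $b = \mathbb{E}(\max(X_1,\dots,X_t))$, $s = \mathbb{P}(X_1,\dots,X_t = 0)$, $c$, and $d$, first disposing of the case $p > q$ by invoking Lemma~\ref{lemma: large_val_better}, and then observes that the gain from appending the $Y_j$'s is $(1-p)s(d-c)$ for the $X$-team versus $(1-q)s(d-c)$ for the $Y$-team. Your decomposition $pA + (1-p)D$ versus $pA+(1-p)C$ (and the analogues for $Y$) isolates exactly the same marginal quantity --- your $D - C$ equals the paper's $s(d-c)$ --- but it does so by conditioning directly on whether $X$ (resp.\ $Y$) fires, which removes the need to order the $X_i$, to assume $X \geq X_i$, or to locate $Y$ in that order. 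Your treatment of the case $p \geq q$ via $A \geq B$ and $A \geq D$ also makes the lemma independent of Lemma~\ref{lemma: large_val_better}, whereas the paper leans on it there. The one hypothesis both arguments genuinely need, and which you correctly flag, is $Y \geq Y_j$ for all $j$ (so that the $Y_j$'s are pointwise dominated by the deterministic values $y \leq x$ and vanish from the max on the event that the lead variable fires); your identification of $D \leq A$ as the place where this enters is accurate.
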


The proof of this lemma is similar to the first lemma and is in the Appendix.

We can now easily prove Theorem \ref{thm: opt_greedy}

\begin{proof}
{\em (Theorem \ref{thm: opt_greedy})}
We prove this inductively, showing that the element chosen by hill-climbing at time $i$ is part of the optimal set from then on.
Our base case is proving the first element chosen, $X = (x,p)$, which has greatest expectation, is always in the optimal set. Suppose the optimal set size $k$ is $\{ Y_1,...,Y_k\}$. Then if some $Y_i \leq X$, by Lemma \ref{lemma: large_val_better}, we could replace $Y_i$ by $X$. So $X \leq Y_k$. But as $Y_k$ only appears as $\mathbb{E}(Y_k)$ in $\mathbb{E}(\max(Y_1,...,Y_k)$, and $X$ has greatest expectation, we can replace $Y_k$ by $X$. 

Suppose we have chosen $t$ random variables, $X_1 \geq ... \geq X_t$, with the $t^{\mathrm{th}}$ random variable chosen being $X_i$. By the induction hypothesis, we know $X_j$ for $j \neq i$ are part of any $\geq t$ sized optimal set. For an optimal solution size $k$, let $Y_1 \geq ... \geq Y_m$ (where $m$ may equal $0$) be the random variables distinct from $X_i$, inbetween $X_{i-1}$ and $X_{i+1}$ value-wise. Similarly, let $Z_1\geq...\geq Z_h$ be the random variables inbetween $X_{i+1}$ and $X_k$. We have a few cases.

First note if $m > 0$, and $X_i \geq Y_j$ some $j$, then as $\mathbb{E}(\max(X_i,...,X_t)) \geq \mathbb{E}(\max(Y_j,X_{i+1},...,X_t))$, by applying Lemma \ref{lemma: wb_swap}, we can swap $Y_j$ with $X_i$. So $X_i \leq Y_j$ for all $j$, or $m = 0$. In either case, if $h > 0$, applying Lemma \ref{lemma: wb_swap} again, we may swap $X_i$ with $Z_1$. So $h = 0$, and so in order value, the final string of random variables in the optimal set is just $X_i, X_{i+1},...,X_k$. Note that if we take the smallest random variable distinct from the $X_l$ larger than $X_i$, say $Y$, $X_j \geq Y \geq X_{j+1}$, then as
$$\mathbb{E}(\max(X_1,...,X_t)) \geq \mathbb{E}(\max(Y,X_1,...,X_{i-1},X_{i+1},...X_t))$$
from the choice of elements by the hill-climbing algorithm, by the recursive structure of the expected maximum, we must have
$$ \mathbb{E}(\max(X_j,X_{j+1},...,X_i,...,X_t)) \geq \mathbb{E}(\max(X_j,Y,...,X_{i-1},X_{i+1},...,X_t))$$
so we can swap $Y$ with $X_i$. This completes the induction step, and the proof.
\end{proof}

This proof method gives us a simple condition which is sufficient (though slightly stronger than necessary) for when the hill climbing algorithm finds the optimal set:
\begin{condition} \label{condition:hill-climbing}
Let $f$ be a submodular function on a universe $U$. If $S_{t} = \{x_1,...,x_t \}$ is the set picked by hill climbing at time $t$, (with $S = \emptyset$) at $t = 0$, and $x_{t+1}$ is the next element chosen by hill climbing, then for any $ Z \subset U \setminus S_t$, must have
$$ \max_{z \in Z} f\left( S_t \cup \{x_{t+1} \} \cup Z \setminus \{z \}\right) \geq f\left(S_t \cup Z\right) $$ 
\end{condition}
For submodular functions satisfying Condition \ref{condition:hill-climbing},
it is possible to prove the optimality of hill-climbing as above.
Given that $S_t$ is part of the optimal set, we show that we can
always substitute in $x_{t+1}$ into the optimal solution and ensure
the value of $f$ doesn't decrease. Hence, $x_{t+1}$ must be part of
the optimal set.

\section{Test Scores for Competition}

Thus far we have considered a setting in which we want to assemble a
collaborative team, and we use test scores to identify team members.
But there are other natural contexts where we can ask about the power
of fixed ``scores'' to identify the quality of participants, and one of
these is a setting in which there is competition between individuals.

There is a large literature on the use of numerical scores to represent the
quality of participants in a competitive domain 
(e.g. \cite{elo-chess-ratings,herbrich-trueskill}). 
Our purpose in this short section is to describe a
basic result establishing a tight limit on the power of such scores in
an abstract setting.

We consider the following simple model of competition between pairs of
individuals.
Each possible competitor $i$ in our setting is represented by a random
variable $X_i$; we can think of $X_i$ as representing the distribution
of how well $i$ will perform in any given competition.
Thus, when competitors $i$ and $j$ are paired against each other,
each draws independently from their respective random variables 
$X_i$ and $X_j$;
these draws represent their performance in this instance of the 
$i$-$j$ competition. 
The competitor who draws the larger number is the winner.
(If they draw equal values, we declare them to have tied.)

Now, by analogy with previous sections --- but adapted here
to our competitive setting --- we would like to assign a numerical score 
to each competitor so that by comparing the scores of $i$ and $j$,
we can form an estimate of which is likely to win in a competition
between them.

A natural question is whether we can find a score for each competitor
so that the competitor with the higher score in a pairwise competition
is more likely to win.
Formulating this to allow for the possibility of ties as well, 
we'd like a function
$f$ that maps random variables to real numbers,
so that if $X_i$ and $X_j$ are random variables with 
$f(X_i) \geq f(X_j)$ then 
\[ \mathbb{P}(X_i \geq X_j) \geq \frac12. \]

It turns out that such a function does not exist.
To establish this fact, we use a counter-intuitive probabilistic
structure known as {\em non-transitive dice}.
A set of non-transitive dice is a collection of random variables
$X_1, \ldots, X_n$ for which $\mathbb{P}(X_i > X_{i+1}) > 1/2$
(with addition taken modulo $n$, so that $\mathbb{P}(X_n > X_1) > 1/2$
as well).

Here is a simple example, using six-sided dice $X, Y, Z$ with
non-standard sets of numbers written on their six faces.
Suppose
\begin{itemize}
\item $X$ has sides $2,2,4,4,9,9$;
\item $Y$ has sides $1,1,6,6,8,8$;
\item $Z$ has sides $3,3,5,5,7,7$.
\end{itemize} 
Then it is easy to compute that
\[ \mathbb{P}(X > Y) = \mathbb{P}(Y > Z) = \mathbb{P}(Z > X) = \frac{5}{9} \tag{*} \]

It is known that for all $\gamma < 3/4$, there exist sets of
non-transitive dice 
$X_1, \ldots, X_n$ for which $\mathbb{P}(X_i > X_{i+1}) > \gamma$
\cite{li-chien-nontransitive-dice,trybula-nontransitive-dice,usiskin-nontransitive-dice}.

Using non-transitive dice, one can directly put a limit on the
power of test scores for competition.

\begin{theorem}
Let $f$ be any function mapping random variables to real numbers,
and let $\beta > 1/4$.
Then there exist random variables $X$ and $Y$ for which
$f(X) \geq f(Y)$ but $\mathbb{P}(X \geq Y) < \beta$.
\label{thm:nontransitive}
\end{theorem}

\begin{proof}
Since $1 - \beta < 3/4$, we can find a set of 
non-transitive dice
$X_1, \ldots, X_n$ for which $\mathbb{P}(X_i > X_{i+1}) > 1 - \beta$.
For any function $f$ mapping random variables to real numbers,
let us apply $f$ to each of $X_1, \ldots, X_n$.
Let $f(X_i)$ be a maximum value among $f(X_1), \ldots, f(X_n)$.
Then we have $f(X_i) \geq f(X_{i-1})$ (since $f(X_i)$ is a maximum value),
but $\mathbb{P}(X_{i-1} > X_i) > 1 - \beta$ by the definition
of the sequence of non-transitive dice; and hence
$\mathbb{P}(X_i \geq X_{i-1}) < \beta$.
\end{proof}

Let us state this result in slightly different language.
A {\em test score} is any function 
$f$ mapping random variables to real numbers.
We say that $f$ has {\em resolution} $\alpha$ if 
for all random variables $X$ and $Y$ with $f(X) \geq f(Y)$,
we have $\mathbb{P}(X \geq Y) \geq \alpha$.
Then Theorem~\ref{thm:nontransitive} shows that there is no test score with
resolution $1/2$, and in fact no test score with resolution 
$\alpha$ for any $\alpha > 1/4$.

Suppose, then, that we were to weaken our goal and simply ask:
is there a test score with some positive resolution $\alpha > 0$?
We now show, via a simple construction, that this is the case:
in fact, there is a test score with resolution $1/4$, establishing
that the negative result of Theorem~\ref{thm:nontransitive} is tight.

\begin{theorem}
Let $f$ be a function that maps a random variable $X$ to a 
{\em median value} --- that is, a number $x$ such that
$\mathbb{P}(X \geq x) \geq 1/2$ and $\mathbb{P}(X \leq x) \geq 1/2$. 
(Note that such an $x$ need not be unique.)

Then if $X$ and $Y$ are random variables with $f(X) \geq f(Y)$, we have
$\mathbb{P}(X \geq Y) \geq 1/4$.
That is, $f$ is a test score with resolution $1/4$.
\end{theorem}

\begin{proof}
The proof follows directly from the definition of a median value.
Suppose $f(X) \geq f(Y)$. Then
\[ \mathbb{P}(X \geq Y) \geq \mathbb{P}(X \geq f(X)) \mathbb{P}(Y \leq f(Y) \geq \frac{1}{2} \cdot \frac{1}{2} = \frac{1}{4} \]
\end{proof}

\section{Conclusion and Open Problems}
In this paper, we have demonstrated that for a natural family of submodular
performance metrics, team selection can happen solely on an individual
basis, with minimal concession in team quality. However, this
selection criterion is more intricate than the canonical test
(singleton set value), the performance of which we also characterized.
Not all submodular functions are amenable to such an
approximation, and we exhibited examples where {\em no} function could
always guarantee a constant order bound. This leads to the natural
question of whether it is possible to characterize the {\em truly}
submodular functions (functions for which, like the expected maximum,
the canonical test performs poorly) which can approximated in such a fashion. 
There may be an opportunity to connect such questions to a distinct
literature on approximating a
submodular function with only a small number of values known
\cite{goemans-approx-submodular}, and approximation by juntas
\cite{submodular-juntas}. Another interesting direction is to relax the assumption of knowing the distribution of our random variables $X_i$. In many real life scenarios, we may not have a true skill distribution for candidates, but may instead have to rely on noisy samples. This problem may have links to work on robust estimation, \cite{huber-robust-estimation}.

Finally, we also explored the implications of independence of random
variables when using hill-climbing to approximate the size-$k$ set
maximizing the expected maximum. We established that for certain
random variables, we could find the true optimum this way. A natural
question is then, for what distributional assumptions can we guarantee
optimality, or a significantly better approximation ratio? Much work
has been done on structural properties of ensembles of random
variables with different distributions
\cite{k-modal}, \cite{poisson-binom},  and it is possible that such
techniques may be useful here.

\xhdr{Acknowledgments} 
This work was supported in part by a Simons Investigator Award, a Google Research Grant, a Facebook Faculty Research Grant, an ARO MURI grant, and NSF grant IIS-0910664.


\section{Appendix}
Here we provide a proof of \ref{thm: other_test}.

\begin{proof}
Note that if we find upper and lower bounds like Theorem \ref{thm: ub} and Theorem \ref{thm : lb}, then we can use the final part of the proof of Theorem \ref{thm: main1} unchanged to give our desired result. 

First, note that if $\mathbb{E}(X|E) \leq c$, then any value of $X$ not in its top $h/k$ quantile must be $\leq c$ (conditioning on $E$ ensures the the expectation of $X$ is a linear combination of the top values of $X$.) Now, if $X_1,...,X_k$ such that $\mathbb{E}(X_i|E_i) \leq c$ for all $i$, then letting $T \subset [k]$ and 
$$C_T = \{ \omega \in [0,1]^k : \omega_i > 1 - \frac{h}{k} \iff i \in T \}$$
be defined analogously to before, we get
$$g_h( (X_1,...,X_k )1_{C_T}) \leq \mathbb{P}(C_T)|T|c + \mathbb{P}(C_T)hc$$
as before. Summing up we note 
$$ \mathbb{P}(C_T) = \left( \frac{h}{k} \right)^{|T|} \left(1 - \frac{h}{k} \right)^{k - |T|}  $$
so we have a Binomial distribution parameters $(k, h/k)$, similar to before, so
$$g_h(X_1,...,X_k) \leq \sum_{i=0}^k ic \cdot \binom{k}{i}\left( \frac{h}{k} \right)^{i} \left(1 - \frac{h}{k} \right)^{k - i} + hc = 2hc$$
This gives us an upper bound. The lower bound is of a similar flavor to the upper bound. Suppose $X_1,...X_k$ such that $\mathbb{E}(X_i|E_i) \geq c$ for all $i$, and $T$ and $C_T$ are as above. Then note that
$$ g_h((X_1,...,X_k) 1_{C_T}) \geq \mathbb{P}(C_T) \cdot \min(|T|, h)c$$
i.e. for an event $\omega \in C_T$, $g_h(X_1,...,X_k)$ is greater than summing the minimum of $h$ and $|T|$ of the random variables that take values in their top $h/k$ quantile. Noting we have the same Binomial distribution as before
$$ g_h(X_1,...,X_k) \geq \sum_{i = h/2}^k \frac{hc}{2} \cdot \binom{k}{i}\left( \frac{h}{k} \right)^{i} \left(1 - \frac{h}{k} \right)^{k - i} \geq \frac{hc}{4} $$
where the last inequality follows by noting that as the mean of this distribution is $h$, the median certainly contained in the range $h/2 \leq i \leq k$. 

Note that to be entirely precise, we should replace $h/2$ with $\lfloor \frac{h}{2} \rfloor$. The $h = 1$ case then needs to be dealt with separately. For $h = 1$, note that the probability at least one of the $X_i$ takes a value in its top $h/k = 1/k$ quantile is
$$ 1 - \left(1 - \frac{1}{k} \right)^{k} \geq 1 - \frac{1}{e}$$
So for the $h = 1$ case we can bound below by
$$ \left(1 - \frac{1}{e} \right)c$$

We finish using the same proof as in Theorem \ref{thm: main1}, getting $\mu = 16$. 
\end{proof}

\yhdr{Submodularity and Negative Examples: Proofs}

We first give a proof of Theorem \ref{thm: ptwise_submodular}

Below is the full proof of Theorem \ref{thm: matroid_rank_thm}

\begin{proof}
{\em (Theorem \ref{thm: matroid_rank_thm})}
Like before, we assume for contradiction that such an $f$ does exist. We need a Lemma.
\begin{lemma} \label{lemma: bounded_along_line}
Let $x \in \mathbb{R}^m$. Then the set 
$$ \{ f(\lambda x) : \lambda \in \mathbb{R} \} $$
is bounded.
\end{lemma}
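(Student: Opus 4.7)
The plan is to argue by contradiction. Suppose that for some $x \in \mathbb{R}^m$ the set $\{f(\lambda x) : \lambda \in \mathbb{R}\}$ is unbounded. Necessarily $x \neq 0$, since otherwise this set collapses to the singleton $\{f(0)\}$. Unboundedness then yields a sequence of distinct scalars $\lambda_1, \lambda_2, \ldots \in \mathbb{R}$ (automatically distinct as vectors, since $x \neq 0$) along which $f(\lambda_n x) \to \infty$.

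From this sequence I will manufacture a finite set $S \subset \mathbb{R}^m$ on which $f$ violates its hypothesized constant-factor approximation guarantee. Fix a team size $k$ exceeding the claimed approximation constant (legitimate because that constant is independent of $k$, and $m$ is assumed large enough to carry $k$ linearly independent vectors). Choose $k$ linearly independent vectors $v_1, \ldots, v_k \in \mathbb{R}^m$, none of which lies on the line $\mathrm{span}(x)$; this is possible because the scalar multiples of $x$ form a $1$-dimensional subspace that can be avoided generically. Set $M = \max_{1 \leq i \leq k} f(v_i)$, and, using the sequence above, select indices $n_1, \ldots, n_k$ with $f(\lambda_{n_j} x) > M$ for every $j$. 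Let
$$ S \;=\; \{\lambda_{n_1} x, \ldots, \lambda_{n_k} x, \; v_1, \ldots, v_k\}. $$

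Now compare the team picked by $f$ with the true optimum inside $S$. Each $\lambda_{n_j} x$ has $f$-value strictly greater than any $f(v_i)$, so the top-$k$ elements of $S$ under $f$ are exactly $\{\lambda_{n_1} x, \ldots, \lambda_{n_k} x\}$; but these all lie on a single line through the origin, so the matroid rank of this selection is $1$. On the other hand, $\{v_1, \ldots, v_k\} \subset S$ is linearly independent, achieving rank $k$. The approximation ratio on $S$ is therefore at least $k$, which exceeds the supposed constant bound, and we reach a contradiction.

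The main obstacles are purely setup issues: one must check that distinct $\lambda_n$ produce distinct vectors (handled by $x \neq 0$) and that one can simultaneously push $k$ different $f(\lambda_n x)$ past any finite threshold (handled by $f(\lambda_n x) \to \infty$, which gives cofinitely many indices above any bound). The conceptual content is simply that the rank function is blind to rescaling along a single direction, so an unbounded $f$ along that direction lets one bait the test into packing its top-$k$ selection into a rank-$1$ subset while the optimum has rank $k$.
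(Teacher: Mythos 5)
Your proof is correct and follows essentially the same route as the paper's: extract a sequence $\lambda_{n_1},\ldots,\lambda_{n_k}$ with $f$-values exceeding the maximum $f$-value over a fixed linearly independent set (the paper uses the standard basis vectors $e_1,\ldots,e_k$), and observe that the test then selects $k$ collinear vectors of rank $1$ while the optimum has rank $k$. Your added care about $x\neq 0$ and about distinctness of the chosen vectors is a harmless refinement of the same argument.
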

\begin{proof}
Suppose not, then there is a sequence $(\lambda_n)_{n \in \mathbb{N}}$ such that 
$$ f(\lambda_n x) \geq n$$
But letting $e_1, ..., e_k$ be the standard basis vectors, and $c = \max f(e_i)$, there are $\lambda_{n_1},...,\lambda_{n_k}$ with 
$$ f(\lambda_{n_i}x) > c$$
so in the set $ \{ e_1,...,e_k, \lambda_{n_1}x,...,\lambda_{n_k}x \}$, the optimal set has rank $k$ but the highest scoring $k$ set has rank $1$.
\end{proof}
The consequence (from the fundamental property of the real numbers) is that any sequence of vectors along a particular direction have a convergent subsequence. In particular, defining
$$ a_{in} = f \left(\frac{e_i}{n}\right) $$
we see that for each $i$, $(a_{in})$ has a convergent subsequence. Relabelling if necessary, let this convergent subsequence be $(a_{in})$, with
$$ a_{in} \rightarrow b_i$$
for each $i$. Wlog, we assume that $b_1 \geq b_2 ... \geq b_k$. We now complete the theorem by examining a few cases.
\begin{itemize}
\item[Case 1:] $b_1 > b_{k/2}$ In this case, we can take terms very close to $b_1$ and terms very close to $b_i$ for $i \geq k/2$ to ensure we pick all the $a_{1m}$ terms which only have rank 1. 

In more detail, let $\delta < b_1 - b_{k/2}$.  Then as we have a finite number of convergent sequences, $\exists N$ such that for all $m > N$, $|a_{im} - b_i| < \delta/3$ for all $i$. So for $l,m > N$, and for all $i \geq k/2$ we have
$$ a_{1m} > a_{il}$$
In particular, in the set
$$ \{ a_{1m},...,a_{1(m+k)}, a_{(k/2)l},...,a_{kl} \} $$
the $k$ set with the maximum $f$ values are the first $k$, for a rank of $1$, but the optimal set can achieve rank $k/2 + 1$ (taking say the last $k/2 + 1$ elements), providing the desired contradiction.

\item[Case 2] $b_1 = ... = b_{k/2} = b$ Here we derive a contradiction by looking more closely at what each sequence $a_{ij}$ for $i \leq k/2$ can do and deriving a contradiction. Assume from now on that $i \leq k/2$. 
\begin{itemize}
\item[(i)] If for some $i$, say $i = 1$, there was $n_1,...n_k$ and $\delta >0$ such that $a_{1{n_j}} > b + \delta$, then for $j \neq 1$, picking $a_{jl_j}$ within $\delta/2$ of $b$ would mean $\{ a_{1n_r} : r \leq k \} \cup \{ a_{jl_j} : j \leq k/2 \}$ would form a bad set for $f$, with a $2/k$ approximation ratio. 
\item[(ii)] So certainly only finitely many terms $ > b$ for any $i$. Discarding them, assume the sequences $a_{ij} \leq b$ for all $i,j$. If for some $i$, say $i = 1$, $k$ or more terms were equal to $b$, say $a_{1n_1},...,a_{1n_k}$ then for any $j$ (noting we break ties as in the worst case), $f$ performs poorly ($2/k$ approximation) on the set $\{a_{1n_1},...,a_{1n_k} \} \cup \{ a_{21},...,a_{(k/2)1} \}$.
\item[(iii)] So for each $i$, only finitely many terms $ = b$. Discarding those, assume all $a_{ij} < b$. Let $c = \min_i a_{i1}$. Then picking $n_1,...,n_k$ so $a_{1n_k} > c$,  $f$ has the same poor $2/k$ approximation on $\{ a_{11},...,a_{(k/2)1} , a_{1n_1},...,a_{1n_k} \}$.
\end{itemize}
\end{itemize}

This completes the proof of the Theorem.  $\square$
\end{proof}

We now give the full proof for the bad example for supermodular functions.

\begin{proof}
{ \em (Theorem \ref{thm: supermodular})}

Assume such an $f$ does exist. Let $K^1,...,K^N$ be the set of 
size-$(k+1)$ complete graphs. 
Let the vertices of $K^j$ be 
$\{v_{j1},...,v_{j(k+1)} \}$
in increasing
order of $f$-value, 
Consider $K^1$. For
$j > k$, say $K^j$ is {\em bad} with respect to $K^1$ if
$f(v_{j(k+1)}) \geq f(v_{1k})$. If $K^{n_1},...,K^{n_k}$ are all bad with
respect to $K^1$, then in the set
$\{v_{11},...,v_{1k},v_{n_1(k+1)},...,v_{n_k(k+1)} \}$, the set chosen
by the test score would be $v_{n_1(k+1)},...,v_{n_k(k+1)}$, for {\em no}
induced edges, while the optimal set is $v{11},...,v_{1k}$ with
$k(k-1)/2$ induced edges.

So there are less than $k$ graphs bad with respect to $K^1$. Similarly to before, applying the same argument to $K^2,...,K^k$,  we note that in $K^{k+1},...,K^{k^2 + k + 1}$, there is at least one graph that is not bad with respect to all of $K^1,...K^k$, say $K^m$. But then taking the set $\{v_{1(k+1)},...,v_{k(k+1)}, v_{m1},...,v_{mk} \}$, the test score pick $v_{1(k+1)},...,v_{k(k+1)}$ again with no induced edges, while the optimal set is $v_{m1},...,v_{mk}$ with $k(k-1)/2$ edges. 
\end{proof}

\yhdr{ Hill-Climbing and Optimality}

Below is the proof of the second lemma to show optimality in the weighted Bernoulli case.

\begin{proof}
{\em (Lemma \ref{lemma: wb_swap})}
We prove this by contradiction. Again, we may assume that $X \geq X_i$ for all $i$, $X_i$ are in value order, and $ X_t \geq Y \geq X_{t+1}$ as before. Using the notation of Lemma \ref{lemma: large_val_better} first note that $p \leq q$, as otherwise, $\mathbb{E}(X) \geq \mathbb{E}(Y)$, and we could directly apply Lemma \ref{lemma: large_val_better}. Our assumption gives the following inequality: 
$$ px + (1 - p)b + (1 - p) sc \geq b + sqy + (1 - q)sc $$

Suppose the Lemma is false. Then, we have
$$ px + (1 - p)b + (1 - p) sd < b + sqy + (1 - q)sd$$
where 
$$d = \mathbb{E}(\max(X_{t+1},...,X_k,Y_1,...,Y_m))$$
We show that both of these inequalities cannot hold simultaneously.

As $p \leq q$, we have that
$$ \mathbb{E}(\max(X,X_1,...,Y_m)) - \mathbb{E}(\max(X,X_1,...,X_k)) = (1 - p)s(d - c) \geq (1 - q)s(d - c)$$
 
But 
$$\mathbb{E}(\max(Y,X_1,...,Y_m)) - \mathbb{E}(\max(Y,X_1,...,X_k)) = (1 - q)s(d - c) $$
Writing
$$ \mathbb{E}(\max(X,X_1,...,Y_m)) = \left(\mathbb{E}(\max(X,X_1,...,Y_m)) - \mathbb{E}(\max(X,X_1,...,X_k)) \right) + \mathbb{E}(\max(X,X_1,...,X_k)) $$
and $\mathbb{E}(\max(Y,X_1,...,Y_m))$ analogously and comparing contradicts the falsity of the Lemma.
\end{proof}

\end{document}